\documentclass[11pt,draftcls, onecolumn]{IEEEtran}

\usepackage{amsmath}
\usepackage{amsthm}
\usepackage{amssymb}
\usepackage{verbatim}
\usepackage{algorithm}
\usepackage{algorithmic}
\usepackage{graphicx}
\usepackage{epstopdf}
\usepackage{paralist}
\usepackage{color}

\ifCLASSOPTIONcompsoc
\usepackage[tight,normalsize,sf,SF]{subfigure}
\else
\usepackage[tight,footnotesize]{subfigure}
\fi

\theoremstyle{plain}
\newtheorem{lemma}{Lemma}

\newtheorem{theorem}{Theorem}

\newtheorem{definition}{Definition}
\newtheorem{corollary}{Corollary}
\theoremstyle{definition}

\floatname{algorithm}{Algorithm}
\newtheorem*{remark}{Remark}
\newcommand{\argmax}{\operatornamewithlimits{argmax}}

\allowdisplaybreaks[4]

\begin{document}
%
\title{On Optimality of Myopic Sensing Policy with Imperfect Sensing in Multi-channel Opportunistic Access}
%
%
%
%

\author{Kehao~Wang \qquad Lin~Chen \qquad Quan~Liu \qquad Khaldoun~Al~Agha
\IEEEcompsocitemizethanks{\IEEEcompsocthanksitem K.~Wang, L.~Chen and K. Al~Agha are with the Laboratoire de Recherche en Informatique (LRI), Department of Computer Science, the University of Paris-Sud XI, 91405 Orsay, France (e-mail: \{Kehao.Wang, Lin.Chen, Khaldoun.Alagha\}@lri.fr). K.~Wang and Q.~Liu is with the school of Information Engineering, Wuhan University of Technology, 430070 Hubei, China (e-mail: \{Kehao.wang, Quan.Liu\}@whut.edu.cn).}}

\IEEEcompsoctitleabstractindextext{%
\begin{abstract}
We consider the channel access problem under imperfect sensing of channel state in a multi-channel opportunistic communication system, where the state of each channel evolves as an independent and identically distributed Markov process. The considered problem can be cast into a restless multi-armed bandit (RMAB) problem that is of fundamental importance in decision theory. It is well-known that solving the RMAB problem is PSPACE-hard, with the optimal policy usually intractable due to the exponential computation complexity. A natural alternative is to consider the easily implementable myopic policy that maximizes the immediate reward but ignores the impact of the current strategy on the future reward. In this paper, we perform an analytical study on the optimality of the myopic policy under imperfect sensing for the considered RMAB problem. Specifically, for a family of generic and practically important utility functions, we establish the closed-form conditions under which the myopic policy is guaranteed to be optimal even under imperfect sensing. Despite our focus on the opportunistic channel access, the obtained results are generic in nature and are widely applicable in a wide range of engineering domains.
\end{abstract}

\begin{IEEEkeywords}
Restless multi-armed bandit (RMAB) problem, myopic policy, imperfect sensing, opportunistic spectrum access (OSA)
\end{IEEEkeywords}}

\maketitle

\IEEEdisplaynotcompsoctitleabstractindextext

%
\IEEEpeerreviewmaketitle

\section{Introduction}
\label{section:introduction}

We consider an opportunistic multi-channel communication system in which a user has access to multiple channels, but is limited to sense and transmit only on a subset of them at a time. The fundamental problem we study is how the sender can exploit past observations and the knowledge of the stochastic properties of the channels to maximize its utility (e.g., expected throughput) by switching opportunistically across channels.

Formally, the considered channel access problem can be cast into the restless multi-armed bandit (RMAB) problem, one of the most well-known generalizations of the
classic multi-armed bandit (MAB) problem, which is of fundamental importance in stochastic decision theory. The standard formulation of the RMAB problem
can be briefly summarized as follows: There is a bandit of $N$ independent arms, each evolving as a
two-state Markov process. At each time slot, a player chooses $k$ ($1 \le k \le N$) of the $N$ arms to play and
receives a certain amount of reward depending on the state of the played arms. Given the initial state of
the system, the goal of the player is to find the optimal policy of playing the $k$ arms at each slot so as
to maximize the aggregated discounted long-term reward.

Despite the significant research efforts in the field, the RMAB problem in its generic form still remains
open. Until today, very little result is reported on the structure of the optimal policy. Obtaining the
optimal policy for a general RMAB problem is often intractable due to the exponential computation
complexity. Hence, a natural alternative is to seek a simple myopic policy maximizing the short-term
reward. Due to its simple and robust structure, the myopic sensing policy has begun to attract significant research attention, especially on the optimality of the myopic sensing policy.

The vast majority of studies in the area assume perfect observation of channel states. However, sensing or observation errors are inevitable in practical scenario (e.g., due to noise and system limitations), especially in wireless communication systems which is the focus of our work. More specifically, a good (bad, respectively) channel may be sensed as bad (good) and accessing a bad channel leads to zero reward. In such context, it is crucial to study the structure and the optimality of the myopic sensing policy with imperfect observation. We would like to emphasize that the presence of sensing error brings two difficulties when studying the myopic sensing policy in this new context.
\begin{itemize}
\item The channel state evolves as a non-linear mapping (w.r.t. the current channel state) instead of a linear one in the perfect sensing case.
\item In the non-perfect sensing case, the state transition of a channel depends not only on the channel evolution itself, but also on the observation outcome, meaning that the transition is not deterministic.
\end{itemize}

Due to the above particularities\footnote{Please refer to the remark of~\eqref{eq:belief_update_err_sym} for a detailed analysis}, our problem requires an original study on the optimality of the myopic sensing policy that cannot draw on existing results in the perfect sensing case. We would like to report that despite its practical importance and particularities, very few work has been done on the impact of sensing error on the performance of the myopic sensing policy, or more generically, on the RMAB problem under imperfect observation. To the best of our knowledge, \cite{Kliu10} is the only work in this area, where the optimality of the myopic policy is proved for the case of two channels with a particular utility function. In this paper, we derive closed-form conditions under which the myopic sensing policy is optimal under imperfect sensing for arbitrary $N$ and generic utility functions. As shown in Section~\ref{subsection:discussion}, the result obtained in this paper can cover the result of~\cite{Kliu10}.
Moreover, this paper also significantly extends our previous work~\cite{Wang11TPS}, focusing on perfect sensing scenario in which the analysis cannot be applied in the imperfect sensing scenario due to the non-trivial particularities introduced by sensing error as mentioned previously. In this regard, our work in this paper contributes the existing literature by developing an adapted analysis on the RMAB problem under imperfect sensing under the generic framework proposed in~\cite{Wang11TPS}.

The rest of the paper is organized as follows: Our model is formulated in Section~\ref{section:Problem_formulation}. Section~\ref{section:optimality} studies the optimality of the myopic sensing policy and illustrates the application of the derived results via two typical examples. A detailed discussion on the related work is given in Section~\ref{section:discussion}. Finally, the paper is concluded by Section~\ref{section:conclusion}.

\section{Problem Formulation}
\label{section:Problem_formulation}

\subsection{Multi-channel Opportunistic Access with Imperfect Sensing}

As outlined in the Introduction, we consider a multi-channel opportunistic communication system, in which a user is able to access a set $\cal N$ of $N$ independent and statistically identical channels, each characterized by a Markov chain of two states, \emph{good/idle} ($1$) and \emph{bad/busy} ($0$). The state transmission probabilities are given by $\{p_{i,j}\},i,j=0,1$. 
We assume that the system operates in a synchronously time slotted fashion with the time slot indexed by $t$ ($t = 1,2,\cdots,T$), where $T$ is the time horizon of interest. Each channel goes through state transition at the beginning of each slot $t$. This generic multi-channel opportunistic communication model can be naturally cast into the opportunistic spectrum access (OSA) problem in cognitive radio systems where an unlicensed secondary user can opportunistically access the temporarily unused channels of the licensed primary users, with the availability of each channel evolving as an independent Markov chain.

Limited by hardware constraints and energy cost, the user is allowed to sense only $k$ ($1\le k\le N$) of the $N$ channels at each slot $t$. We denote the set of channels chosen by the user at slot $t$ by ${\cal A}(t)$ where ${\cal A}(t)\in {\cal N}$ and $|{\cal A}(t)|=k$. We assume that the user makes the channel selection decision at the beginning of each slot after the channel state transition. Moreover, we are interested in the imperfect sensing scenario where channel sensing is subject to errors, i.e., a good channel may be sensed as bad one and vice versa. Let $\mathbf{S}(t)\triangleq[S_1(t),\cdots,S_N(t)]$ denote the channel state vector where $S_i(t)\in\{0,1\}$ is the state of channel $i$ in slot $t$ and let $\mathbf{S'}(t)\triangleq\{S'_i(t), i\in{A(t)}\}$ denote the sensing outcome vector where $S_i'(t)=0$ ($1$) means that the channel $i$ is sensed bad (good) in slot $t$. Using such notation, the performance of channel state detection is characterized by two system parameters: the probability of false alarm $\epsilon_i(t)$ and the probability of miss detection $\delta_i(t)$, formally defined as follows:
\begin{eqnarray*}
  \epsilon_i(t) \triangleq \text{Pr}\{S_i'(t)=1 | S_i(t)=0\}, \\
  \delta_i(t)   \triangleq \text{Pr}\{S_i'(t)=0 | S_i(t)=1\}.
\end{eqnarray*}
In our analysis, we consider the case where $\epsilon_i(t)$ and $\delta_i(t)$ are independent w.r.t. $t$ and $i$. More specifically, we defined $\epsilon$ and $\delta$ as the system-wide false alarm rate and miss detection rate. We also assume that when the receiver successfully receives a packet from a channel, it sends an acknowledgement to the transmitter
over the same channel at the end of the slot. The absence of an ACK signifies that the transmitter does not transmit
over this channel or transmitted but the channel is busy in this slot.

Obviously, by sensing only $k$ out of $N$ channels, the user cannot observe the state information of the whole system. Hence, the user has to infer the channel states from its past decision and observation history so as to make its future decision. To this end, we define the \emph{channel state belief vector} (hereinafter referred to as \emph{belief vector} for briefness) $\Omega(t)\triangleq\{\omega_i(t), i\in{\cal N}\}$, where $0\le \omega_i(t)\le 1$ is the conditional probability that channel $i$ is in state good (i.e., $S_i(t)=1$) at slot $t$ given all past states, actions and observations\footnote{The initial belief $\omega_i(1)$ can be set to $\frac{p_{01}}{p_{01}+1-p_{11}}$ if no information about the initial system state is available.}.
Due to the Markovian nature of the channel model, the belief vector can be updated recursively using Bayes Rule as shown in~\eqref{eq:belief_update_err_sym}.
\begin{equation}
\omega_i(t+1)=
\begin{cases}
p_{11}, & i\in {\cal A}(t), ACK=1 \\
\tau(\varphi(\omega_i(t))), & i\in {\cal A}(t), ACK=0 \\
\tau(\omega_i(t)), & i\not\in {\cal A}(t)
\end{cases},
\label{eq:belief_update_err_sym}
\end{equation}
where $ACK=1$ denotes the case where an ACK is received (successful transmission, i.e., $S_i'(i)=1$ and $S_i(t)=1$) and $ACK=0$ denotes the case where no ACK is received (failed transmission or no transmission, i.e., $S_i'(i)=1$ $S_i(t)=0$ or $S'(t)=0$),
$\varphi(\omega_i)=\frac{\epsilon \omega_{i}(t)}{\epsilon \omega_{i}(t) + 1 - \omega_{i}(t)}$ and
\begin{equation}
\tau(\omega_i(t))\triangleq\omega_i(t)p_{11}+[1-\omega_i(t)]p_{01}
\label{eq:tau_err_sym}
\end{equation}
denotes the operator for the one-step belief update.

\begin{remark}
We would like to emphasize that in contrast to the perfect sensing case~\cite{Wang11TPS} where $\omega_i(t+1)$ is a linear function of $\omega_i(t)$ whether $i$ in sensed or not, in the imperfect sensing case, the mapping from $\omega_i(t)$ to $\omega_i(t+1)$ is no longer linear due to the sensing error (cf. the second line of equation~\eqref{eq:belief_update_err_sym}). Moreover, the state transition of a channel depends not only on the channel evolution itself, but also on the observation outcome, i.e., $\omega_i(t+1)=p_{11}$ for $i\in {\cal A}(t), ACK=1$ and $\omega_i(t+1)=\tau(\varphi(\omega_i(t)))$ for $i\in {\cal A}(t), ACK=0$.
As will be shown later, these differences make the analysis for the imperfect sensing more complicated.
\end{remark}

To conclude this subsection, we state some structural properties of $\tau(\omega_i(t))$  and $\varphi(\omega_i(t))$ that are useful in the subsequent proofs.
\begin{lemma}
If $\forall i$, $p_{01}<p_{11}$, then
\begin{itemize}
\item $\tau(\omega_i(t))$ is monotonically increasing in $\omega_i(t)$;
\item $p_{01}\le \tau(\omega_i(t))\le p_{11}$, $\forall \ 0\le \omega_i(t)\le 1$.
\end{itemize}
\label{lemma:property_tau_err_sym}
\end{lemma}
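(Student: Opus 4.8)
The plan is to exploit the fact that $\tau$ is an affine (degree-one) function of its argument. Rewriting \eqref{eq:tau_err_sym} as $\tau(\omega_i(t)) = p_{01} + (p_{11}-p_{01})\,\omega_i(t)$, we see that $\tau$ is a straight line with slope $p_{11}-p_{01}$ and intercept $p_{01}$. Under the standing assumption $p_{01}<p_{11}$ this slope is strictly positive, so $\tau$ is strictly (hence monotonically) increasing in $\omega_i(t)$; this settles the first bullet. Formally one may either differentiate, $\frac{d}{d\omega}\tau(\omega) = p_{11}-p_{01}>0$, or argue directly that $\omega<\omega'$ implies $\tau(\omega')-\tau(\omega) = (p_{11}-p_{01})(\omega'-\omega)>0$.

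For the second bullet I would invoke the monotonicity just established together with the range of the belief value. Since $\omega_i(t)$ is a conditional probability it lies in $[0,1]$, and a monotonically increasing function on $[0,1]$ attains its extreme values at the two endpoints. Evaluating, $\tau(0) = p_{01}$ and $\tau(1) = p_{01}+(p_{11}-p_{01}) = p_{11}$, whence $p_{01} = \tau(0) \le \tau(\omega_i(t)) \le \tau(1) = p_{11}$ for every $\omega_i(t)\in[0,1]$, as claimed.

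There is essentially no obstacle in this lemma; the content is a one-line observation about affine maps. The only point worth flagging for the subsequent development is that the identical reasoning, applied after first checking that $\varphi(\omega)=\frac{\epsilon\omega}{\epsilon\omega+1-\omega}\in[0,1]$ whenever $\omega\in[0,1]$ (a ratio of nonnegative quantities whose numerator never exceeds its denominator, using $\epsilon\le 1$), shows that the composed update $\tau(\varphi(\cdot))$ appearing in the second line of \eqref{eq:belief_update_err_sym} is likewise monotone and confined to $[p_{01},p_{11}]$ — which is the form in which Lemma~\ref{lemma:property_tau_err_sym} is actually invoked in the optimality proofs. I would therefore state and use these two bullets exactly as written, and, if needed, append the $\varphi$ bound as a short companion observation.
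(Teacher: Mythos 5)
Your proof is correct and follows exactly the paper's approach: the paper likewise observes that $\tau(\omega_i(t))=(p_{11}-p_{01})\omega_i(t)+p_{01}$ is affine with positive slope and concludes immediately, while you simply spell out the endpoint evaluation. No gap; nothing further needed.
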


\begin{proof}
Lemma~\ref{lemma:property_tau_err_sym} follows from $\tau(\omega_i(t))=(p_{11}-p_{01})\omega_i(t)+p_{01}$ straightforwardly.
\end{proof}

\begin{lemma}
If $0\leq \epsilon \leq \frac{(1-p_{11})p_{01}}{p_{11}(1-p_{01})}$, then
\begin{itemize}
  \item $\varphi(\omega_i(t))$ increases monotonically in $\omega_i(t)$ with $\varphi(0)=0$ and $\varphi(1)=1$;
  \item $\varphi(\omega_i(t)) \leq p_{01}$, $\forall p_{01}\leq \omega_i(t) \leq p_{11}$.
\end{itemize}
\label{lemma:property_phi_err_sym}
\end{lemma}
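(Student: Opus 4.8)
The plan is to treat the two bullet points separately, both reducing to elementary calculus on the rational map $\varphi(\omega_i)=\frac{\epsilon\omega_i}{\epsilon\omega_i+1-\omega_i}$, where I write $\omega_i$ for $\omega_i(t)$ throughout.

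For the monotonicity claim, I would first rewrite the denominator as $g(\omega_i)\triangleq\epsilon\omega_i+1-\omega_i=1-(1-\epsilon)\omega_i$ and observe that for $\omega_i\in[0,1]$ we have $g(\omega_i)\ge\epsilon\ge0$, so $\varphi$ is well defined (and $g>0$ whenever $\epsilon>0$). Differentiating the quotient, the cross terms $\epsilon(\epsilon-1)\omega_i$ cancel and one is left with $\varphi'(\omega_i)=\frac{\epsilon}{g(\omega_i)^2}\ge0$, so $\varphi$ is monotonically increasing on $[0,1]$; note this part does not even use the hypothesis on $\epsilon$. The boundary values are immediate by substitution: $\varphi(0)=0$ and $\varphi(1)=\frac{\epsilon}{\epsilon}=1$.

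For the upper bound $\varphi(\omega_i)\le p_{01}$ on $p_{01}\le\omega_i\le p_{11}$, I would invoke the monotonicity just established, so that it suffices to bound $\varphi$ at the right endpoint, i.e. to show $\varphi(p_{11})\le p_{01}$; then $\varphi(\omega_i)\le\varphi(p_{11})\le p_{01}$ for every $\omega_i\le p_{11}$. Since $g(p_{11})=\epsilon p_{11}+1-p_{11}>0$, the inequality $\frac{\epsilon p_{11}}{\epsilon p_{11}+1-p_{11}}\le p_{01}$ is equivalent, after clearing the denominator and cancelling the common term $\epsilon p_{11}p_{01}$, to $\epsilon p_{11}(1-p_{01})\le p_{01}(1-p_{11})$, i.e. to $\epsilon\le\frac{(1-p_{11})p_{01}}{p_{11}(1-p_{01})}$, which is exactly the assumed condition. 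Here I implicitly use $p_{01}<p_{11}<1$ (as in the standing assumption $p_{01}<p_{11}$) so that the quantities being divided through are strictly positive.

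I do not expect a real obstacle: the statement is a pair of one-line manipulations. The only points requiring care are (i) verifying that the denominator $g(\omega_i)$ stays strictly positive on the relevant interval so that both $\varphi$ and $\varphi'$ are meaningful, and (ii) keeping the direction of the inequality correct when cross-multiplying in the second bullet, which is legitimate precisely because that denominator is positive.
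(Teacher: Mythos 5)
Your proof is correct and follows the only natural route: the paper itself merely asserts that the lemma ``follows straightforwardly'' from the formula for $\varphi$, and your computation of $\varphi'(\omega_i)=\epsilon/g(\omega_i)^2\ge 0$ together with the endpoint check $\varphi(p_{11})\le p_{01}\iff\epsilon\le\frac{(1-p_{11})p_{01}}{p_{11}(1-p_{01})}$ is exactly the omitted verification. The only point worth flagging is the degenerate case $\epsilon=0$, where $\varphi(1)$ is formally $0/0$; you already note that $g>0$ requires $\epsilon>0$ at $\omega_i=1$, and this edge case is immaterial to how the lemma is used.
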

\begin{proof}
Noticing that $\varphi(\omega_i)=\frac{\epsilon \omega_{i}(t)}{\epsilon \omega_{i}(t) + 1 - \omega_{i}(t)}$, Lemma~\ref{lemma:property_phi_err_sym} follows straightforwardly.
\end{proof}

\subsection{Optimal Sensing Problem Formulation and Myopic Sensing Policy}

Given the imperfect sensing context, we are interested in the user's optimization problem to find the optimal sensing policy $\pi^*$ that maximizes the expected total discounted reward over a finite horizon. Mathematically, a sensing policy $\pi$ is defined as a mapping from the belief vector $\Omega(t)$ to the action (i.e., the set of channels to sense) ${\cal A}(t)$ in each slot $t$: $\pi: \ \Omega(t)\rightarrow{\cal A}(t), |{\cal A}(t)|=k, \ t=1, 2, \cdots, T.$

The following gives the formal definition of the optimal sensing problem:
\begin{equation}
\pi^*=\argmax_{\pi} \mathbb{E}\left.\left[\sum^{T}_{t=1} \beta^t R_{\pi}(\Omega(t))\right|\Omega(1)\right]
\label{eq:pb_formulation}
\end{equation}
where $R_{\pi}(\Omega(t))$ is the reward collected in slot $t$ under the sensing policy $\pi$ with the initial belief vector $\Omega(1)$, $0\le \beta\le 1$ is the discounting factor characterizing the feature that the future rewards are less valuable than the immediate reward. By treating the belief value of each channel as the state of each arm of a bandit, the user's optimization problem can be cast into a restless multi-armed bandit problem.

In order to get more insight on the structure of the optimization problem formulated in~\eqref{eq:pb_formulation} and the complexity to solve it, we derive the dynamic programming formulation of~\eqref{eq:pb_formulation} as follows:
\begin{align*}
V_{T}(\Omega(t))=& \max_{\pi} \mathbb{E}[R_{\pi}(\Omega(T))]=\max_{\substack{{\cal A}(T)\subseteq \mathcal{N} \\ |{\cal A}(T)|=k}} \mathbb{E}[R_{\pi}(\Omega(T))],  \\
V_{t}(\Omega(t))=& \max_{\substack{{\cal A}(t)\subseteq \mathcal{N} \\ |{\cal A}(t)|=k}} \mathbb{E}\left[ R_{\pi}(\Omega(t))+\beta  \sum_{{\mathcal E}\subseteq{\cal A}(t)} \prod_{i\in{\mathcal E}}(1-\epsilon)\omega_i(t) \right. \\ & \left. \prod_{j\in{\cal A}(t)\backslash{\mathcal E}}[1-(1-\epsilon)\omega_j(t)] V_{t+1}(\Omega(t+1))\right].
\end{align*}

In the above equations, $V_t(\Omega(t))$ is the value function corresponding to the maximal expected reward from time slot $t$ to $T$ ($1\le t\le T$) with the believe vector $\Omega(t+1)$ following the evolution described in~\eqref{eq:belief_update_err_sym} given that the channels in the subset $\cal E$ are sensed in state good and the channels in ${\cal A}(t)\backslash{\cal E}$ are sensed in state bad. 

Theoretically, the optimal policy can be obtained by solving the above dynamic programming. Unfortunately, due to the impact of the current action on the future reward and the unaccountable space of the belief vector, obtaining the optimal solution directly from the above recursive equations is computationally prohibitive. Hence, a natural alternative is to seek simple myopic sensing
policy which is easy to compute and implement that maximizes the expected immediate reward $F(\Omega(t))$, formally defined
as follows:
\begin{equation}
\label{eq:definition_myopic_p}
   \mathcal{A}(t)=\argmax_{\mathcal{A}(t)\subseteq{\mathcal{N}}} \Sigma_{i\in \mathcal{A}(t) } F(\Omega(t)).
\end{equation}

In this paper, we focus on a class of generic and practically important functions defined in~\cite{Wang11TPS} as \textit{regular} functions. More specifically, the expected immediate reward function $F(\Omega(t))$ studied in this paper are assumed to be symmetrical, monotonically non-decreasing and decomposable, defined by the three axioms in~\cite{Wang11TPS}.
Under this condition, the myopic policy consists of choosing the $k$ channels with the largest value of $\omega$. In the following sections we focus on the structure and the optimality of the myopic sensing policy under imperfect sensing. As pointed out in the remark following equations~\eqref{eq:belief_update_err_sym} and~\eqref{eq:tau_err_sym}, the main technical difficulties compared with the perfect sensing case are the non-linearity of the mapping from $\omega_i(t)$ to $\omega_i(t+1)$ and the dependency of the channel state transition on the observation outcome.

\section{Analysis on Optimality of Myopic Sensing Policy under Imperfect Sensing}
\label{section:optimality}

The goal of this section is to establish closed-form conditions under which the myopic sensing policy, despite of its simple structure, achieves the system optimum under imperfect sensing. To this end, we set up by defining an auxiliary function and studying the structural properties of the auxiliary function, which serve as a basis in the study of the optimality of the myopic sensing policy. We then establish the main result on the optimality followed by the illustration on how the obtained result can be applied via two concrete application examples.

For the convenience of discussion, we firstly state some notations before presenting the analysis:
\begin{itemize}
\item The believe vector $\Omega(t)$ is sorted to $[\omega_1(t), \cdots, \omega_N(t)]$ at each slot $t$ such that ${\cal A}= \{1, 2, \cdots, k\}$ {\footnote{For presentation simplicity, by slightly abusing the notations without introducing ambiguity, we drop the time slot index $t$.}};
\item ${\cal N}(m)\triangleq \{1,\cdots,m\} \ (m\le N)$ denotes the first $m$ channels in $\cal N$;
\item Given ${\cal E}\subseteq {\cal M}\subseteq{\cal N}$, $\displaystyle Pr({\cal M},{\mathcal E})\triangleq\prod_{i\in{\mathcal E}}(1-\epsilon)\omega_i(t)\prod_{j\in{\cal M}\setminus{\mathcal E}}[1-(1-\epsilon)\omega_j(t)]$, herein, $Pr({\cal M},{\mathcal E})$ denotes the expected probability that the channels in $\cal E$ are sensed in the good state, while the channels in ${\cal M}\setminus{\mathcal E}$ are sensed in the bad state, given that the channels in $\cal M$ are sensed;
\item $\mathbf{P_{11}^{\cal E}}$ denotes the vector of length $|{\cal E}|$ with each element being $p_{11}$;
\item $\mathbf{\Phi}(l,m)\triangleq[\tau(\omega_i(t)), l\le i\le m]$ where the components are sorted by channel index. $\mathbf{\Phi}(l,m)$ characterizes the updated belief values of the channels between $l$ and $m$ if they are not sensed;
\item Given ${\cal E}\subseteq {\cal M}\subseteq{\cal N}$, $\mathbf{Q^{{\cal M},{\cal E}}}\triangleq [\tau(\varphi(\omega_i(t))), i\in{\cal M}\setminus{\cal E}]$ where the components are sorted by channel index. $\mathbf{Q^{{\cal M},{\cal E}}}$ characterizes the updated belief values of the channels in ${\cal M}\setminus{\mathcal E}$ if they are sensed in the bad state; $\mathbf{\overline{Q}^{{\cal M},{\cal E},l}}\triangleq [\tau(\varphi(\omega_i(t))), i\in{\cal M}\setminus{\cal E} \text{ and } i<l]$ characterizes the updated belief values of the channels in ${\cal M}\setminus{\mathcal E}$ if they are sensed in the bad state with the channel index smaller than $l$; $\mathbf{\underline{Q}^{{\cal M},{\cal E},l}}\triangleq [\tau(\varphi(\omega_i(t))), i\in{\cal M}\setminus{\cal E} \text{ and } i>l]$ characterizes the updated belief values of the channels in ${\cal M}\setminus{\mathcal E}$ if they are sensed in the bad state with the channel index larger than $l$;
\item Let $\omega_{-i}\triangleq\{\omega_j, j\in{\cal A}, j\ne i\}$ and
    \begin{eqnarray*}
    \begin{cases}
    \displaystyle \Delta_{max}\triangleq \max_{ \omega_{-i}\in[0,1]^{k-1}} \ \{F(1, \omega_{-i})-F(0, \omega_{-i})\}, \\
    \displaystyle \Delta_{min}\triangleq \min_{ \omega_{-i}\in[0,1]^{k-1}} \ \{F(1, \omega_{-i})-F(0, \omega_{-i})\}.
    \end{cases}
    \end{eqnarray*}
\end{itemize}

\subsection{Definition and Properties of Auxiliary Value Function}

In this subsection, inspired by the form of the value function $V_t(\Omega(t))$ and the analysis in~\cite{Ahmad09}, we first define the auxiliary value function with imperfect sensing and then derive several fundamental properties of the auxiliary value function, which are crucial in the study on the optimality of the myopic sensing policy.

\begin{definition}[Auxiliary Value Function under Imperfect Sensing]
The auxiliary value function, denoted as $W_t(\Omega)$ ($t=1, 2, \cdots, T$) is recursively defined as follows:
\begin{align}
\label{eq:w_T_err_sym}
W_T(\Omega(T))=& F(\omega_1(T), \cdots, \omega_k(T)); \\
W_t(\Omega(t))=& F(\omega_1(t), \cdots, \omega_k(t))+  \nonumber \\
    & \beta \sum_{{\mathcal E}\subseteq{{\cal N}(k)}} Pr({\cal N}(k), {\mathcal E})W_{t+1}(\Omega_{{\mathcal E}}(t+1)),
\label{eq:w_t_err_sym}
\end{align}
where $\Omega_{{\mathcal E}}(t+1)\triangleq (\mathbf{P_{11}^{\mathcal E}}, \mathbf{\Phi}(k+1,N), \mathbf{Q^{{\cal N}(k),{\mathcal E}}})$ denotes the belief vector generated by $\Omega(t)$ based on~\eqref{eq:belief_update_err_sym}.
\end{definition}

The above recursively defined auxiliary value function gives the expected cumulated reward of the following sensing policy: in slot $t$, sense the first $k$ channels; if a channel $i$ is correctly sensed idle ($S_i'=1$ and $S_i=1$), then put it on the top of the list to be sensed in next slot, otherwise drop it to the bottom of the list. Recall Lemma~\ref{lemma:property_tau_err_sym} and Lemma~\ref{lemma:property_phi_err_sym}, under the condition $0\leq \epsilon \leq \frac{(1-p_{11})p_{01}}{p_{11}(1-p_{01})}$, if the belief vector $\Omega(t)$ is ordered decreasingly in slot $t$, the above sensing policy is the myopic sensing policy with $W_t(\Omega(t))$ being the total reward from slot $t$ to $T$.

In the subsequent analysis of this subsection, we prove some structural properties of the auxiliary value function.

\begin{lemma}[Symmetry]
\label{lemma:symmetry_sym_err}
If the expected reward function $F$ is regular, the correspondent auxiliary value function $W_t(\Omega)$ is symmetrical in any two channel $i, j\le k$ for all $t=1, 2, \cdots, T$, i.e.,
\begin{multline}
\label{lemma:avf_sym}
W_t(\omega_{1}, \cdots, \omega_i, \cdots, \omega_j, \cdots, \omega_N) = \\ W_t(\omega_{1}, \cdots, \omega_j, \cdots, \omega_i, \cdots, \omega_N), \quad \forall i, j\le k.
\end{multline}
\end{lemma}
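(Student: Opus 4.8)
The plan is to prove the lemma by backward induction on the slot index $t$, running from $t=T$ down to $t=1$. The base case $t=T$ is immediate from~\eqref{eq:w_T_err_sym}: $W_T(\Omega)=F(\omega_1,\dots,\omega_k)$, and since $F$ is regular it is in particular symmetric, so $W_T$ is symmetric in any two of its first $k$ arguments (it does not even depend on $\omega_{k+1},\dots,\omega_N$).

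For the inductive step, assume $W_{t+1}$ is symmetric in every pair of channels with index $\le k$, fix $i<j\le k$, and let $\tilde\Omega$ be $\Omega$ with the coordinates $\omega_i$ and $\omega_j$ interchanged; write $\tilde\Omega_{\mathcal{E}}(t+1)$ for the belief vector generated by $\tilde\Omega$ in slot $t+1$ when $\mathcal{E}$ is the set of channels sensed good. By~\eqref{eq:w_t_err_sym}, $W_t(\Omega)=F(\omega_1,\dots,\omega_k)+\beta\sum_{\mathcal{E}\subseteq\mathcal{N}(k)}Pr(\mathcal{N}(k),\mathcal{E})\,W_{t+1}(\Omega_{\mathcal{E}}(t+1))$; the $F$-term is invariant under the swap because $F$ is regular, so it suffices to show the sum is invariant. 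I would split the sum according to how $\mathcal{E}$ intersects $\{i,j\}$. (a) If $i,j\in\mathcal{E}$, then $Pr(\mathcal{N}(k),\mathcal{E})$ contains the symmetric factor $(1-\epsilon)\omega_i\,(1-\epsilon)\omega_j$ and both channels are mapped to the constant $p_{11}$ inside the block $\mathbf{P_{11}^{\mathcal{E}}}$, so $\Omega_{\mathcal{E}}(t+1)=\tilde\Omega_{\mathcal{E}}(t+1)$ and the summand is literally unchanged. (b) If exactly one of $i,j$ lies in $\mathcal{E}$, say $i\in\mathcal{E}$, pair $\mathcal{E}$ with $\mathcal{E}'=(\mathcal{E}\setminus\{i\})\cup\{j\}$: one checks directly that $Pr(\mathcal{N}(k),\mathcal{E})$ evaluated at $\tilde\Omega$ equals $Pr(\mathcal{N}(k),\mathcal{E}')$ evaluated at $\Omega$, and that $\tilde\Omega_{\mathcal{E}}(t+1)$ and $\Omega_{\mathcal{E}'}(t+1)$ agree outside the block $\mathbf{Q^{\mathcal{N}(k),\mathcal{E}}}$ while their $\mathbf{Q}$-blocks are rearrangements of one another; the induction hypothesis (together with the $\mathbf{Q}$-block invariance in (c) below) then gives $W_{t+1}(\tilde\Omega_{\mathcal{E}}(t+1))=W_{t+1}(\Omega_{\mathcal{E}'}(t+1))$, so these two summands swap into each other.

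(c) The decisive case is $i,j\notin\mathcal{E}$. Here $Pr(\mathcal{N}(k),\mathcal{E})$ still carries the symmetric factor $[1-(1-\epsilon)\omega_i][1-(1-\epsilon)\omega_j]$, but both channels now fall into the block $\mathbf{Q^{\mathcal{N}(k),\mathcal{E}}}$ of $\Omega_{\mathcal{E}}(t+1)$ with the \emph{non-linearly} updated beliefs $\tau(\varphi(\omega_i))$ and $\tau(\varphi(\omega_j))$, and the swap merely transposes these two entries; since $\mathcal{E}$ itself is unchanged, one is forced to argue \emph{term by term} that $W_{t+1}$ is insensitive to this transposition. I expect this to be the main obstacle. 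In the perfect-sensing case it is vacuous, because a mis-sensed channel's belief collapses to the common value $p_{01}$, so the transposition acts trivially; under imperfect sensing the two updated beliefs genuinely differ, and a real argument is needed. The route I would take is to combine the induction hypothesis with the monotonicity and the range bounds of Lemma~\ref{lemma:property_tau_err_sym} and Lemma~\ref{lemma:property_phi_err_sym}: those bounds keep the $\mathbf{Q}$-entries below the updated beliefs $\tau(\omega_l)$ of the passed-over channels, which pins down where channels $i$ and $j$ sit in $\Omega_{\mathcal{E}}(t+1)$ and in every belief vector it subsequently spawns, and should let one conclude that the two transposed coordinates occupy slots that are interchangeable for $W_{t+1}$ — so that the induction hypothesis finally yields $W_{t+1}(\Omega_{\mathcal{E}}(t+1))=W_{t+1}(\tilde\Omega_{\mathcal{E}}(t+1))$. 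Combining (a)--(c), the sum, and hence $W_t$, is invariant under interchanging $\omega_i$ and $\omega_j$, which closes the induction.
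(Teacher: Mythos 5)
Your skeleton---backward induction, splitting the sum over $\mathcal{E}$ according to $\mathcal{E}\cap\{i,j\}$, and pairing $\mathcal{E}$ with $(\mathcal{E}\setminus\{i\})\cup\{j\}$---is the right one, and your base case, case (a), and the probability-matching in case (b) are correct. The gap is exactly where you suspect it: case (c) (and the $\mathbf{Q}$-block rearrangement you defer to it from case (b)) is not proved, and the tools you propose cannot prove it. After the swap, $\tau(\varphi(\omega_i))$ and $\tau(\varphi(\omega_j))$ are transposed \emph{inside} $\mathbf{Q^{{\cal N}(k),{\mathcal E}}}$, i.e.\ in the trailing $k-|\mathcal{E}|$ coordinates of $\Omega_{\mathcal{E}}(t+1)$. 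The induction hypothesis gives symmetry of $W_{t+1}$ only in coordinates $\le k$, and $W_{t+1}$ is genuinely \emph{not} symmetric in coordinates beyond $k$: the order of the unsensed channels determines which of them are promoted into the sensed set later (e.g.\ for $k=1$, $N=3$, $T=2$, $W_1=F(\omega_1)+\beta\left[(1-\epsilon)\omega_1F(p_{11})+(1-(1-\epsilon)\omega_1)F(\tau(\omega_2))\right]$ depends on $\omega_2$ but not on $\omega_3$). Monotonicity and the bounds $\varphi(\omega)\le p_{01}\le\tau(\omega)$ from Lemmas~\ref{lemma:property_tau_err_sym} and~\ref{lemma:property_phi_err_sym} only tell you the $\mathbf{Q}$-entries are the smallest entries of $\Omega_{\mathcal{E}}(t+1)$; being smallest does not make two distinct coordinates interchangeable for $W_{t+1}$. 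Worse, with the paper's stated convention that $\mathbf{Q}$ is ordered by channel index, the invariance you need can concretely fail: for $N=3$, $k=2$, $T=2$ the $\mathcal{E}=\emptyset$ summand of $W_1$ is $[1-(1-\epsilon)\omega_1][1-(1-\epsilon)\omega_2]\,F(\tau(\omega_3),\tau(\varphi(\omega_1)))$ before the swap and $[1-(1-\epsilon)\omega_1][1-(1-\epsilon)\omega_2]\,F(\tau(\omega_3),\tau(\varphi(\omega_2)))$ after it, all other summands being symmetric; these differ whenever $\epsilon>0$, $\omega_1\ne\omega_2$ and $F$ depends nontrivially on its second argument.

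For comparison, the paper's entire proof is the assertion that the two orderings ``generate the same belief vector $\Omega_{\mathcal{E}}(t+1)$ for any $\mathcal{E}$''---which is precisely the step you could not justify, and which is literally true only if the $\mathbf{Q}$-block is treated as order-insensitive (e.g.\ re-sorted by belief value, which coincides with the index ordering exactly when $\Omega(t)$ is already sorted decreasingly, the only situation in which the lemma is later invoked). So you have correctly isolated the one nontrivial point, but your resolution of it does not work: to close the induction you must either adopt the convention that channels sensed bad are re-inserted in decreasing order of updated belief (making $\Omega_{\mathcal{E}}(t+1)$ and $\tilde\Omega_{\sigma(\mathcal{E})}(t+1)$ literally identical, so cases (b) and (c) become immediate), or prove a separate permutation-invariance of $W_{t+1}$ over the trailing $\mathbf{Q}$-coordinates, which does not follow from the statement being proved and, per the example above, is false in general. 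As written, the proposal does not establish the lemma.
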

\begin{proof}
The lemma can be easily shown by backward induction noticing that $(\omega_{1}, \cdots, \omega_i, \cdots, \omega_j, \cdots, \omega_N)$ and $(\omega_{1}, \cdots, \omega_j, \cdots, \omega_i, \cdots, \omega_N)$ generate the same belief vector $\Omega_{{\mathcal E}}(t+1)$ for any ${{\mathcal E}}$.
\end{proof}

\begin{lemma}[Decomposability]
\label{lemma:decomposability_sym_err}
If the expected reward function $F$ is regular, then the correspondent auxiliary value function $W_t(\Omega(t))$ is decomposable for all $t=1, 2, \cdots, T$, i.e.,
\begin{multline*}
W_t(\omega_{1}, \cdots, \omega_i, \cdots, \omega_N) =  \omega_i W_t(\omega_{1}, \cdots, 1, \cdots, \omega_N) + \\ (1-\omega_i)W_t(\omega_{1}, \cdots, 0, \cdots, \omega_N), \quad \forall i\in{\cal N}.
\end{multline*}
\end{lemma}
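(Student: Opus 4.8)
The plan is to prove decomposability by backward induction on $t$, tracking how the two belief-update maps $\tau$ and $\tau\circ\varphi$ interact with the sensing-outcome weights $Pr(\mathcal{N}(k),\mathcal{E})$. Throughout, write $\Omega^{i\to x}$ for the vector obtained from $\Omega$ by replacing its $i$-th entry with $x$, so the goal at each $t$ is $W_t(\Omega)=\omega_i W_t(\Omega^{i\to1})+(1-\omega_i)W_t(\Omega^{i\to0})$ for every $i\in\mathcal{N}$. The base case $t=T$ is immediate: $W_T(\Omega)=F(\omega_1,\cdots,\omega_k)$, so for $i\le k$ the claim is precisely the decomposability axiom of the regular function $F$, and for $i>k$ both sides equal $W_T(\Omega)$ since $W_T$ does not depend on $\omega_i$.

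For the inductive step, suppose $W_{t+1}$ is decomposable in every coordinate and fix a channel $i$, distinguishing two cases according to whether $i$ is sensed in slot $t$. \emph{Case $i>k$ (channel not sensed).} Then $F(\omega_1,\cdots,\omega_k)$ and every weight $Pr(\mathcal{N}(k),\mathcal{E})$ are independent of $\omega_i$, and $\omega_i$ enters $\Omega_{\mathcal{E}}(t+1)$ only through the single entry $\tau(\omega_i)$. Applying the induction hypothesis at that entry, using that $\tau$ is affine with $\tau(1)=p_{11}$ and $\tau(0)=p_{01}$, and then applying decomposability once more in the reverse direction (to collapse $p_{11}W_{t+1}(\cdots1\cdots)+(1-p_{11})W_{t+1}(\cdots0\cdots)$ back to $W_{t+1}(\cdots p_{11}\cdots)$, and likewise with $p_{01}$), one gets $W_{t+1}(\Omega_{\mathcal{E}}(t+1))=\omega_i\,W_{t+1}((\Omega^{i\to1})_{\mathcal{E}}(t+1))+(1-\omega_i)\,W_{t+1}((\Omega^{i\to0})_{\mathcal{E}}(t+1))$. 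Summing this against $Pr(\mathcal{N}(k),\mathcal{E})$ over $\mathcal{E}$ and adding the ($\omega_i$-free) term $F$ yields the claim.

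\emph{Case $i\le k$ (channel sensed).} Here I would decompose $F$ at coordinate $i$ using its regularity, and split the sum over $\mathcal{E}\subseteq\mathcal{N}(k)$ according to whether $i\in\mathcal{E}$: write $\mathcal{E}=\{i\}\cup\mathcal{E}'$ or $\mathcal{E}=\mathcal{E}'$ with $\mathcal{E}'\subseteq\mathcal{N}(k)\setminus\{i\}$, and factor $Pr(\mathcal{N}(k),\mathcal{E})$ as $(1-\epsilon)\omega_i\cdot Pr_{-i}(\mathcal{E}')$ in the first case and as $[1-(1-\epsilon)\omega_i]\cdot Pr_{-i}(\mathcal{E}')$ in the second, where $Pr_{-i}(\mathcal{E}')$ collects the ($\omega_i$-free) factors of the channels in $\mathcal{N}(k)\setminus\{i\}$. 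The ``rest'' of the generated belief vector is the same function of $\mathcal{E}'$ in both sub-sums, so the two can be paired index-by-index; the only difference is that channel $i$'s entry in $\Omega_{\mathcal{E}}(t+1)$ is $p_{11}$ when $i\in\mathcal{E}$ and $\tau(\varphi(\omega_i))$ when $i\notin\mathcal{E}$. Invoking the induction hypothesis at channel $i$ in both sub-sums, the coefficient multiplying the continuation value with $\omega_i$ set to $1$ is $(1-\epsilon)\omega_i\,p_{11}+[1-(1-\epsilon)\omega_i]\,\tau(\varphi(\omega_i))$; the key algebraic fact is that $1-(1-\epsilon)\omega_i$ equals exactly the denominator $\epsilon\omega_i+1-\omega_i$ of $\varphi$, so $[1-(1-\epsilon)\omega_i]\,\varphi(\omega_i)=\epsilon\omega_i$ and this coefficient collapses to the \emph{affine} expression $\omega_i p_{11}+(1-\omega_i)p_{01}=\tau(\omega_i)$ (dually, the coefficient of the continuation value with $\omega_i$ set to $0$ is $1-\tau(\omega_i)$). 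From here the finish mirrors Case $i>k$: split $\tau(\omega_i)$ and $1-\tau(\omega_i)$ linearly in $\omega_i$ and recognize $p_{11}W_{t+1}(\cdots1\cdots)+(1-p_{11})W_{t+1}(\cdots0\cdots)$ as $W_{t+1}$ on the belief vector generated by $\Omega^{i\to1}$ (note $\varphi(1)=1$, so channel $i$'s sensed-bad update is also $\tau(1)=p_{11}$), and analogously for $\Omega^{i\to0}$ via $\varphi(0)=0$. Combining with the decomposition of $F$ gives $W_t(\Omega)=\omega_i W_t(\Omega^{i\to1})+(1-\omega_i)W_t(\Omega^{i\to0})$.

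The main obstacle, and the only genuinely new difficulty relative to the perfect-sensing analysis of~\cite{Wang11TPS}, is the sensed-bad branch of Case $i\le k$: the updated belief $\tau(\varphi(\omega_i))$ is \emph{not} affine in $\omega_i$, so a naive term-by-term decomposition fails and the perfect-sensing argument cannot be reused verbatim. The resolution is the cancellation noted above (the sensed-bad probability weight is precisely the denominator of $\varphi$), which restores affinity once the two sensing outcomes are aggregated; what remains is the routine but slightly delicate bookkeeping of the splitting $\mathcal{E}=\{i\}\cup\mathcal{E}'$ versus $\mathcal{E}=\mathcal{E}'$ and of the handful of anchor belief values $\{0,1,p_{01},p_{11}\}$.
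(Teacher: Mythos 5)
Your overall strategy --- backward induction, base case via the decomposability axiom of $F$, and the two cases according to whether channel $i$ is sensed --- is exactly the paper's, and you have correctly isolated the decisive algebraic fact: since $1-(1-\epsilon)\omega_i$ is the denominator of $\varphi$, one has $[1-(1-\epsilon)\omega_i]\varphi(\omega_i)=\epsilon\omega_i$ and hence $[1-(1-\epsilon)\omega_i]\,\tau(\varphi(\omega_i))=\epsilon\omega_i p_{11}+(1-\omega_i)p_{01}$, which restores affinity in $\omega_i$ after the two sensing outcomes are accounted for. Your Case $i>k$ is the paper's Case 1 verbatim.

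In the sensed case, however, one intermediate step is not justified as stated. You claim the two sub-sums ($i\in\mathcal{E}$ versus $i\notin\mathcal{E}$) can be paired so that ``the only difference is that channel $i$'s entry is $p_{11}$ when $i\in\mathcal{E}$ and $\tau(\varphi(\omega_i))$ when $i\notin\mathcal{E}$,'' and you then add the two weights to obtain a single coefficient $\tau(\omega_i)$ multiplying ``the continuation value with $\omega_i$ set to $1$.'' But by the definition $\Omega_{{\mathcal E}}(t+1)=(\mathbf{P_{11}^{\mathcal E}}, \mathbf{\Phi}(k+1,N), \mathbf{Q^{{\cal N}(k),{\mathcal E}}})$, a correctly detected channel is moved into the front block while a channel sensed bad is dropped into the $\mathbf{Q}$ block at the end; channel $i$'s updated belief therefore occupies \emph{different positions} in the two branches, and $W_{t+1}$ is symmetric only in its first $k$ arguments (Lemma~\ref{lemma:symmetry_sym_err}), so there is no single continuation value for your aggregated coefficient $\tau(\omega_i)$ to multiply. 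The repair is immediate and is what the paper does: the ACK term requires no decomposition at all, since its weight $(1-\epsilon)\omega_i$ is already affine in $\omega_i$ and its continuation value does not depend on $\omega_i$, so it matches the corresponding term of $\omega_i W_t(\omega_1,\cdots,1,\cdots,\omega_N)$ exactly; the induction hypothesis is then applied only to the NAK term, where your cancellation gives precisely
\begin{equation*}
[1-(1-\epsilon)\omega_i]\,W_{t+1}(\cdots,\tau(\varphi(\omega_i)),\cdots)=(1-\omega_i)W_{t+1}(\cdots,p_{01},\cdots)+\epsilon\omega_i W_{t+1}(\cdots,p_{11},\cdots),
\end{equation*}
with all three vectors agreeing in every coordinate position. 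With that adjustment your argument coincides with the paper's proof.
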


\begin{proof}
The proof is given in the appendix.
\end{proof}

Lemma~\ref{lemma:decomposability_sym_err} can be applied one step further to prove the following corollary.
\begin{corollary}
If the expected reward function $F$ is regular, then for any $l,m\in{\cal N}$ it holds that
\begin{multline*}
\label{corollary:decomposability_sym_err}
W_t(\omega_{1}, \cdots, \omega_l, \cdots, \omega_m, \cdots, \omega_N) - \\
W_t(\omega_{1}, \cdots, \omega_m, \cdots, \omega_l, \cdots, \omega_N) \\
= (\omega_l-\omega_m) \Big[W_t(\omega_{1}, \cdots, 1, \cdots, 0, \cdots, \omega_N) - \\
 W_t(\omega_{1}, \cdots, 0, \cdots,1, \cdots, \omega_N)\Big], \quad t=1, 2, \cdots, T.
\end{multline*}
\end{corollary}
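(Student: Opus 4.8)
The plan is to derive the corollary as a direct consequence of the decomposability property established in Lemma~\ref{lemma:decomposability_sym_err}, by applying that lemma twice and then invoking the symmetry property of Lemma~\ref{lemma:symmetry_sym_err} to cancel common terms. First I would fix the two coordinates of interest, say positions $l$ and $m$, and treat all other coordinates $\omega_1,\cdots$ (excluding $\omega_l,\omega_m$) as frozen parameters. Applying Lemma~\ref{lemma:decomposability_sym_err} to the $l$-th coordinate of the term $W_t(\omega_1,\cdots,\omega_l,\cdots,\omega_m,\cdots,\omega_N)$, I expand it as $\omega_l W_t(\cdots,1,\cdots,\omega_m,\cdots) + (1-\omega_l) W_t(\cdots,0,\cdots,\omega_m,\cdots)$, and then apply the lemma once more to the $m$-th coordinate of each resulting term. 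This yields a four-term expansion with coefficients $\omega_l\omega_m$, $\omega_l(1-\omega_m)$, $(1-\omega_l)\omega_m$, and $(1-\omega_l)(1-\omega_m)$ multiplying $W_t$ evaluated at $(1,1)$, $(1,0)$, $(0,1)$, and $(0,0)$ respectively in positions $(l,m)$.

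Next I would perform the identical double expansion on the second term $W_t(\omega_1,\cdots,\omega_m,\cdots,\omega_l,\cdots,\omega_N)$, in which the values $\omega_m$ and $\omega_l$ sit in positions $l$ and $m$. This produces the same four $W_t$ values at the $(0/1)$-corners, but now weighted by $\omega_m\omega_l$, $\omega_m(1-\omega_l)$, $(1-\omega_m)\omega_l$, and $(1-\omega_m)(1-\omega_l)$. Subtracting the two expansions, the $W_t(\cdots,1,\cdots,1,\cdots)$ terms cancel (both have coefficient $\omega_l\omega_m$) and the $W_t(\cdots,0,\cdots,0,\cdots)$ terms cancel (both have coefficient $(1-\omega_l)(1-\omega_m)$). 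What remains involves only the mixed corners: the coefficient of $W_t(\cdots,1,\cdots,0,\cdots)$ is $\omega_l(1-\omega_m) - (1-\omega_m)\omega_l\cdot[\text{wrong}]$ — more precisely, collecting carefully, the coefficient of $W_t(\cdots,1,\cdots,0,\cdots)$ is $\omega_l(1-\omega_m) - (1-\omega_l)\omega_m$ and the coefficient of $W_t(\cdots,0,\cdots,1,\cdots)$ is $(1-\omega_l)\omega_m - \omega_l(1-\omega_m)$, which are negatives of each other and both simplify to $\pm(\omega_l - \omega_m)$.

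Factoring out $(\omega_l - \omega_m)$ then gives exactly the claimed identity $(\omega_l-\omega_m)\big[W_t(\cdots,1,\cdots,0,\cdots,\omega_N) - W_t(\cdots,0,\cdots,1,\cdots,\omega_N)\big]$. The only subtlety — and the one place where care is needed — is the bookkeeping of which value occupies position $l$ versus position $m$ on each side: in the first term the second expansion is applied with $\omega_m$ still living in slot $m$, while in the second term it is $\omega_l$ that lives in slot $m$, so the roles of the Bernoulli weights are swapped. This is precisely why symmetry of $W_t$ in the two coordinates (Lemma~\ref{lemma:symmetry_sym_err}), which guarantees that the corner values $W_t(\cdots,1,\cdots,0,\cdots)$ and $W_t(\cdots,0,\cdots,1,\cdots)$ depend only on the pattern and not on a relabeling of the frozen coordinates, makes the cancellation of the pure corners legitimate when $l$ or $m$ exceeds $k$ or when the indices are not adjacent. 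I expect the main (minor) obstacle to be simply organizing this coefficient arithmetic transparently; there is no genuine analytic difficulty, since decomposability does all the heavy lifting. A clean write-up would present the two four-term expansions, subtract, and simplify in two or three lines of display math.
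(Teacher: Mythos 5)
Your proposal is correct and is exactly the route the paper intends: the paper gives no explicit proof but states that the corollary follows by applying Lemma~\ref{lemma:decomposability_sym_err} ``one step further,'' i.e.\ the double expansion into the four corner terms with coefficients $\omega_l\omega_m$, $\omega_l(1-\omega_m)$, $(1-\omega_l)\omega_m$, $(1-\omega_l)(1-\omega_m)$, followed by cancellation of the $(1,1)$ and $(0,0)$ corners and the observation that the mixed-corner coefficients are $\pm(\omega_l-\omega_m)$. One minor remark: the appeal to Lemma~\ref{lemma:symmetry_sym_err} is unnecessary, since the frozen coordinates are identical in both terms being subtracted, so the pure-corner values coincide literally and cancel without any symmetry argument.
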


\begin{lemma}[Monotonicity]
\label{lemma:monotonicity_sym_err}
If the expected reward function $F$ is regular, the correspondent auxiliary value function $W_t(\Omega)$ is monotonously non-decreasing in $\omega_l$, $\forall l\in{\cal N}$, i.e.,
$$\omega_l'\ge \omega_l \Longrightarrow W_t(\omega_1, \cdots, \omega_l', \cdots, \omega_N)\ge W_t(\omega_1, \cdots, \omega_l, \cdots, \omega_N).$$
\end{lemma}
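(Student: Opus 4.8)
The plan is to prove monotonicity of $W_t$ by backward induction on $t$, exploiting the decomposability property (Lemma~\ref{lemma:decomposability_sym_err}) to reduce the claim to a statement about the two ``extreme'' belief vectors where the $l$-th coordinate is set to $1$ and to $0$. By decomposability, for any $l\in{\cal N}$,
\begin{equation*}
W_t(\omega_1,\dots,\omega_l,\dots,\omega_N)=\omega_l\, W_t(\omega_1,\dots,1,\dots,\omega_N)+(1-\omega_l)\, W_t(\omega_1,\dots,0,\dots,\omega_N),
\end{equation*}
so $W_t$ is affine in $\omega_l$ with slope $W_t(\dots,1,\dots)-W_t(\dots,0,\dots)$. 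Hence it suffices to show this slope is nonnegative, i.e. that setting the $l$-th coordinate to $1$ yields at least as large a value as setting it to $0$.

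First I would establish the base case $t=T$: here $W_T(\Omega)=F(\omega_1,\dots,\omega_k)$, which is monotonically non-decreasing in each argument because $F$ is regular (one of the three defining axioms). For $l>k$ the value does not depend on $\omega_l$ at all, so monotonicity is trivial. For the inductive step, assume $W_{t+1}$ is monotonically non-decreasing in every coordinate, and consider $W_t$ as given by~\eqref{eq:w_t_err_sym}. I would split into two cases according to whether $l\le k$ (a sensed channel) or $l>k$ (an unsensed channel).

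\emph{Case $l>k$ (unsensed channel).} Only the term $\mathbf{\Phi}(k+1,N)$ in $\Omega_{\mathcal E}(t+1)$ depends on $\omega_l$, through $\tau(\omega_l)$. Increasing $\omega_l$ increases $\tau(\omega_l)$ by Lemma~\ref{lemma:property_tau_err_sym}, and by the induction hypothesis $W_{t+1}$ is non-decreasing in that coordinate; since $F(\omega_1,\dots,\omega_k)$ does not involve $\omega_l$ and every coefficient $Pr({\cal N}(k),{\mathcal E})$ is nonnegative and independent of $\omega_l$, the whole expression~\eqref{eq:w_t_err_sym} is non-decreasing in $\omega_l$.

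\emph{Case $l\le k$ (sensed channel).} This is the delicate case and the main obstacle. Here $\omega_l$ enters~\eqref{eq:w_t_err_sym} in three places: in the immediate reward $F$, in the sensing probabilities $Pr({\cal N}(k),{\mathcal E})$ (which depend on whether $l\in{\mathcal E}$), and in the updated belief vector $\Omega_{\mathcal E}(t+1)$ (either as the entry $p_{11}$ when $l\in{\mathcal E}$, or as $\tau(\varphi(\omega_l))$ when $l\notin{\mathcal E}$). To handle this I would invoke decomposability and compare $W_t(\dots,1,\dots)$ with $W_t(\dots,0,\dots)$ directly. When $\omega_l=1$, channel $l$ is sensed good with probability $1-\epsilon$ (contributing the state $p_{11}$) and sensed bad with probability $\epsilon$ (contributing $\tau(\varphi(1))=\tau(1)=p_{11}$ by Lemma~\ref{lemma:property_phi_err_sym}); when $\omega_l=0$ it is always sensed bad, contributing $\tau(\varphi(0))=\tau(0)=p_{01}$. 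So the difference $W_t(\dots,1,\dots)-W_t(\dots,0,\dots)$ equals the immediate-reward gap $F(1,\omega_{-l})-F(0,\omega_{-l})\ge 0$ (again by regularity of $F$) plus $\beta$ times a difference of future values in which, coordinate by coordinate, the $\omega_l=1$ branch has belief $p_{11}$ and the $\omega_l=0$ branch has belief $p_{01}\le p_{11}$, with the remaining coordinates paired identically and summed against the same nonnegative sensing weights for the other $k-1$ channels. Applying the induction hypothesis (monotonicity of $W_{t+1}$ in that single coordinate, holding all others fixed) to each paired term shows the future-value difference is also nonnegative, hence the slope is nonnegative and $W_t$ is non-decreasing in $\omega_l$. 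I expect the only real bookkeeping effort to be writing $\Omega_{\mathcal E}(t+1)$ explicitly in the two branches and matching up the sums over ${\mathcal E}$; the structural inequalities ($p_{01}\le p_{11}$, $\varphi(0)=0$, $\varphi(1)=1$, regularity of $F$) do all the work.
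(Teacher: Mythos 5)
Your proposal matches the paper's proof essentially step for step: backward induction with the same two-case split (unsensed channel handled by monotonicity of $\tau$ plus the induction hypothesis; sensed channel handled via decomposability, reducing to the slope $W_t(\ldots,1,\ldots)-W_t(\ldots,0,\ldots)$ and using $\varphi(0)=0$, $\varphi(1)=1$ so that the two branches contribute $p_{11}$ versus $p_{01}$). The paper carries out the same comparison of the resulting belief vectors $\Omega^{\mathcal E}_{1-\epsilon}$, $\Omega^{\mathcal E}_{\epsilon}$, $\Omega^{\mathcal E}_{0}$ that you sketch, so this is the same argument at the same level of detail.
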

\begin{proof}
The proof is given in the appendix.
\end{proof}

\subsection{Optimality of Myopic Sensing under Imperfect Sensing}

In this section, we study the optimality of the myopic sensing policy under imperfect sensing. We start by showing the following important auxiliary lemmas (Lemma~\ref{lemma:exchange_err_sym} and Lemma~\ref{lemma:upper_bound_err_sym}) and then establish the sufficient condition under which the optimality of the myopic sensing policy is guaranteed.

\begin{lemma}
\label{lemma:exchange_err_sym}
Given that (1) $F$ is regular, (2) $\epsilon <\frac{p_{01}(1-p_{11})}{P_{11}(1-p_{01})}$, and (3) $\beta \le \frac{\Delta_{min}}{\Delta_{max}\left[(1-\epsilon)(1-p_{01}) + \frac{\epsilon(p_{11}-p_{01})}{1-(1-\epsilon)(p_{11}-p_{01})}\right]}$, if $p_{11} \ge \omega_{l}\ge \omega_m \ge p_{01}$ where $l<m$, then it holds that
\begin{multline*}
W_t(\omega_{1},\cdots,\omega_{l},\cdots,\omega_{m},\cdots,\omega_{N})\ge \\ W_{t}(\omega_{1},\cdots,\omega_{m},\cdots,\omega_{l},\cdots,\omega_{N}), \quad t=1,\cdots,T.
\end{multline*}
\end{lemma}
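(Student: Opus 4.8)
The plan is to prove the inequality by backward induction on $t$, after reducing it, via the Corollary to Lemma~\ref{lemma:decomposability_sym_err}, to the extreme two-channel case. That Corollary writes $W_t(\cdots,\omega_l,\cdots,\omega_m,\cdots)-W_t(\cdots,\omega_m,\cdots,\omega_l,\cdots)$ as $(\omega_l-\omega_m)$ times $\bigl[W_t(\cdots,1,\cdots,0,\cdots)-W_t(\cdots,0,\cdots,1,\cdots)\bigr]$, the $1$ and $0$ occupying positions $l$ and $m$; since $\omega_l\ge\omega_m$ it therefore suffices to show, for every $t$, that $W_t(\Omega^{+})\ge W_t(\Omega^{-})$, where $\Omega^{+}$ (resp.\ $\Omega^{-}$) denotes the belief vector carrying $1$ at position $l$ and $0$ at position $m$ (resp.\ $0$ at position $l$ and $1$ at position $m$), the remaining coordinates being arbitrary. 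I would start the induction at $t=T$: since $W_T=F(\omega_1,\dots,\omega_k)$, the three subcases $m\le k$, $l\le k<m$, $k<l$ follow respectively from the symmetry of $F$, the monotonicity of $F$, and the fact that $F$ ignores coordinates beyond $k$.

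For the inductive step ($t<T$) I would expand $W_t(\Omega^{\pm})$ with~\eqref{eq:w_t_err_sym} and again split on the location of $l$ and $m$. If $m\le k$, both channels are sensed and Lemma~\ref{lemma:symmetry_sym_err} gives equality outright. If $k<l$, both are unsensed, so the immediate reward is the same on both sides, and in every successor the beliefs of channels $l$ and $m$ become $\tau(1)=p_{11}$ and $\tau(0)=p_{01}$ (interchanged between $\Omega^{+}$ and $\Omega^{-}$), sitting in the index-ordered middle block $\mathbf{\Phi}(k+1,N)$ so that $l$'s slot precedes $m$'s; as $p_{01}\le p_{11}$ both lie in $[p_{01},p_{11}]$ and the weights $Pr(\mathcal N(k),\mathcal E)$ coincide on the two sides, the induction hypothesis applied term by term finishes this case.

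The case $l\le k<m$ --- channel $l$ sensed, channel $m$ not --- is the crux. Here I would separate the immediate and discounted-future parts, $W_t(\Omega^{+})-W_t(\Omega^{-})=\bigl[F(\cdots,1,\cdots)-F(\cdots,0,\cdots)\bigr]+\beta\,\Gamma$, where the bracket is at least $\Delta_{min}$ by definition of $\Delta_{min}$. For $\Gamma$ two observations matter: on $\Omega^{-}$ every term with $l\in\mathcal E$ vanishes because $\omega_l=0$, whereas on $\Omega^{+}$ the successor belief of channel $l$ equals $p_{11}$ in either outcome --- it moves to the good block with probability $1-\epsilon$ and, because $\tau(\varphi(1))=p_{11}$, to the bad block with probability $\epsilon$ --- while channel $m$ turns into $\tau(0)=p_{01}$ on $\Omega^{+}$ and $\tau(1)=p_{11}$ on $\Omega^{-}$ inside the middle block. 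Pulling out the common base weight $q_{\mathcal E}=\prod_{i\in\mathcal E}(1-\epsilon)\omega_i\prod_{j\in\mathcal N(k)\setminus(\mathcal E\cup\{l\})}[1-(1-\epsilon)\omega_j]$ (which sums to $1$ over the subsets $\mathcal E\subseteq\mathcal N(k)$ not containing $l$) one gets $\Gamma=\sum_{\mathcal E}q_{\mathcal E}\,g_{\mathcal E}$ with $g_{\mathcal E}=(1-\epsilon)W_{t+1}(\Omega^{+}_{\mathcal E\cup\{l\}})+\epsilon\,W_{t+1}(\Omega^{+}_{\mathcal E})-W_{t+1}(\Omega^{-}_{\mathcal E})$, the vectors $\Omega^{\pm}_{\mathcal E}$ being the successors produced by sensing $\mathcal E$ good. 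I would then lower-bound each $g_{\mathcal E}$ by comparing these successor vectors, which differ only at the slots of $l$ and $m$: Lemma~\ref{lemma:monotonicity_sym_err}, Lemma~\ref{lemma:symmetry_sym_err}, Lemma~\ref{lemma:decomposability_sym_err} and the induction hypothesis together show that promoting the $p_{11}$-valued channel $l$ toward the front only increases $W_{t+1}$, while the loss from downgrading channel $m$'s belief from $p_{11}$ to $p_{01}$ is at most $(p_{11}-p_{01})$ times the per-coordinate sensitivity of $W_{t+1}$; bounding that sensitivity uniformly in $t$ (the content of Lemma~\ref{lemma:upper_bound_err_sym}) and accumulating the contribution of the repeated ``$\epsilon$-outcome'' events along the recursion should yield $g_{\mathcal E}\ge-\Delta_{max}\bigl[(1-\epsilon)(1-p_{01})+\frac{\epsilon(p_{11}-p_{01})}{1-(1-\epsilon)(p_{11}-p_{01})}\bigr]$, hence $\Gamma\ge-\Delta_{max}C$ with $C$ the bracketed quantity; assumption (3), namely $\beta\le\Delta_{min}/(\Delta_{max}C)$, then gives $\Delta_{min}+\beta\Gamma\ge0$ and closes the induction. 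Assumption (2) is used throughout to guarantee, via Lemma~\ref{lemma:property_phi_err_sym}, that all beliefs entering the recursion stay in $[p_{01},p_{11}]$, so that the induction hypothesis is applicable at each step.

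The hard part is precisely this last estimate. Unlike the perfect-sensing setting, the two successor belief vectors on the $\Omega^{+}$ and $\Omega^{-}$ sides are not related by a single transposition: channel $l$ splits into a ``top'' copy and a ``bottom'' copy, both carrying belief $p_{11}$, so the deficit created in the $\epsilon$-event cannot be cancelled locally and must be traced down the recursion until it telescopes into the closed-form factor $\frac{\epsilon(p_{11}-p_{01})}{1-(1-\epsilon)(p_{11}-p_{01})}$ of condition (3). Producing that exact constant, rather than a looser geometric bound, is what forces the careful bookkeeping, and it is where a $t$-uniform bound on the sensitivity of $W_t$ (Lemma~\ref{lemma:upper_bound_err_sym}) is indispensable --- either invoked as a separate fact, or, if its own proof relies on the exchange property, carried as a second invariant in the same backward induction.
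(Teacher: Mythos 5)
Your proposal is correct and follows essentially the same route as the paper: reduce via the decomposability corollary to the $1/0$ versus $0/1$ comparison, split into the three cases according to whether positions $l$ and $m$ are sensed, lower-bound the immediate-reward gap by $\Delta_{min}$ in the crux case $l\le k<m$, and control the discounted future term by invoking the two bounds of Lemma~\ref{lemma:upper_bound_err_sym} on the $(1-\epsilon)$- and $\epsilon$-branches to produce exactly the constant in condition (3). Your closing remark that Lemma~\ref{lemma:upper_bound_err_sym} must be carried as a second invariant in the same backward induction is precisely what the paper does, proving the two lemmas jointly.
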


\begin{lemma}
\label{lemma:upper_bound_err_sym}
Given that (1) $F$ is regular, (2) $\epsilon <\frac{p_{01}(1-p_{11})}{P_{11}(1-p_{01})}$, and (3) $\beta \le \frac{\Delta_{min}}{\Delta_{max}\left[(1-\epsilon)(1-p_{01}) + \frac{\epsilon(p_{11}-p_{01})}{1-(1-\epsilon)(p_{11}-p_{01})}\right]}$, if $p_{11}\ge \omega_1\ge \cdots \ge \omega_N\ge p_{01}$, for any $1\le t\le T$, it holds that

\begin{multline*}
W_t(\omega_{1},\cdots,\omega_{N-1},\omega_N)-W_{t}(\omega_{N},\omega_{1},\cdots,\omega_{N-1}) \le (1-\omega_{N})\Delta_{max},
\end{multline*}
\begin{multline*}
W_t(\omega_{1},\omega_{2},\cdots,\omega_{N-1},\omega_N)-W_{t}(\omega_{N},\omega_{2},\cdots,\omega_{N-1}, \omega_1) \le \\ (p_{11}-p_{01})\Delta_{max}\frac{1-[\beta(1-\epsilon)(p_{11}-p_{01})]^{T-t+1}}{1-\beta(1-\epsilon)(p_{11}-p_{01})}.
\end{multline*}
\end{lemma}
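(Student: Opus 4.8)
The plan is to establish both inequalities together by backward induction on $t$, from $t=T$ down to $t=1$, using the recursion~\eqref{eq:w_t_err_sym} for $W_t$ together with the structural results already proved: symmetry (Lemma~\ref{lemma:symmetry_sym_err}), decomposability (Lemma~\ref{lemma:decomposability_sym_err}) and its Corollary, monotonicity (Lemma~\ref{lemma:monotonicity_sym_err}), and the exchange inequality (Lemma~\ref{lemma:exchange_err_sym}). For the base case $t=T$ one has $W_T=F$ on the first $k$ coordinates; I would use symmetry of $F$ to line up the two argument lists, then decomposability to write each difference in the form $(\omega_a-\omega_b)\bigl[F(\cdots,1,\cdots)-F(\cdots,0,\cdots)\bigr]$, bound the bracket by $\Delta_{max}$, and bound the gap $\omega_a-\omega_b$ by $1-\omega_N$ in the first inequality (using $\omega_a\le 1$) and by $p_{11}-p_{01}$ in the second (using $p_{01}\le\omega_b\le\omega_a\le p_{11}$); both estimates then agree with the value of the claimed right-hand sides at $t=T$.

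For the inductive step it is cleanest to reduce the second inequality first. Applying the Corollary with $l=1$ and $m=N$ gives
\begin{multline*}
W_t(\omega_1,\omega_2,\cdots,\omega_{N-1},\omega_N)-W_t(\omega_N,\omega_2,\cdots,\omega_{N-1},\omega_1)\\
=(\omega_1-\omega_N)\bigl[W_t(1,\omega_2,\cdots,\omega_{N-1},0)-W_t(0,\omega_2,\cdots,\omega_{N-1},1)\bigr],
\end{multline*}
and since $0\le\omega_1-\omega_N\le p_{11}-p_{01}$ it suffices to show that $H_t\triangleq W_t(1,\omega_2,\cdots,\omega_{N-1},0)-W_t(0,\omega_2,\cdots,\omega_{N-1},1)$ is bounded by $\Delta_{max}$ times the geometric factor in the statement (when $H_t$ happens to be negative the product is trivially nonpositive, hence below the right-hand side). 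Expanding $W_t(1,\omega_2,\cdots,0)$ and $W_t(0,\omega_2,\cdots,1)$ via~\eqref{eq:w_t_err_sym}, I would exploit the degeneracy of the update at the extreme beliefs — $\varphi(1)=1$, $\varphi(0)=0$, $\tau(1)=p_{11}$, $\tau(0)=p_{01}$ — so that the channel of belief $1$ becomes $p_{11}$ whether it is observed good (probability $1-\epsilon$) or bad, differing only in whether this $p_{11}$ is placed at the top or the bottom of the post-update order, while the channel of belief $0$ is always observed bad and becomes $p_{01}$. Grouping the two expansions by the observation pattern of the $k-1$ common sensed channels, pairing corresponding terms, and invoking Lemma~\ref{lemma:exchange_err_sym} together with Lemmas~\ref{lemma:property_tau_err_sym}--\ref{lemma:property_phi_err_sym} (the condition $\epsilon<\frac{p_{01}(1-p_{11})}{p_{11}(1-p_{01})}$ ensures $\varphi(\omega)\le p_{01}$, so every post-update belief lies in $[p_{01},p_{11}]$ and the observed-bad channels sit at the bottom of the ordering) to move the ``displaced'' $p_{11}$ back to the top, the immediate-reward part contributes at most $\Delta_{max}$, while the discounted part — after one further application of the Corollary to expose the relevant belief gap, which is weighted by $\beta(1-\epsilon)$ and has already been contracted by the factor $p_{11}-p_{01}$ since $\tau$ scales differences by $p_{11}-p_{01}$ — is bounded by $\beta(1-\epsilon)(p_{11}-p_{01})H_{t+1}$. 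This yields the recursion $H_t\le\Delta_{max}+\beta(1-\epsilon)(p_{11}-p_{01})H_{t+1}$ which, with the base estimate $H_T\le\Delta_{max}$, solves to exactly the finite geometric sum in the statement.

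The first inequality is handled by the same induction with a coarser per-slot estimate. Expanding $W_t(\omega_1,\cdots,\omega_{N-1},\omega_N)$ and $W_t(\omega_N,\omega_1,\cdots,\omega_{N-1})$ and using symmetry of $F$, the immediate-reward difference is at most $(\omega_k-\omega_N)\Delta_{max}\le(1-\omega_N)\Delta_{max}$; for the discounted part one pairs the outcome terms of the two expansions so that the surviving mismatch comes only from the unequal weights $(1-\epsilon)\omega_k$ and $(1-\epsilon)\omega_N$ on the ``observed-good'' branches, where the floated channel has collapsed to $p_{11}$ regardless of whether it carried $\omega_k$ or $\omega_N$; bounding this mismatch with Lemma~\ref{lemma:exchange_err_sym}, Lemma~\ref{lemma:monotonicity_sym_err} and the induction hypothesis keeps the overall bound at $(1-\omega_N)\Delta_{max}$.

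The step I expect to be the main obstacle is the combinatorial bookkeeping in the inductive step, and in particular the position-dependence of the belief update emphasized in the Remark after~\eqref{eq:belief_update_err_sym}: because $\Omega_{\mathcal E}(t+1)$ is ordered by the sensing policy rather than by belief value, one must track precisely whether a ``floated'' or ``displaced'' belief falls inside or outside the top-$k$ block that $W_{t+1}$ will next sense, and the pairing of the outcome terms must be chosen so that the residual terms telescope into exactly the stated geometric sum rather than a looser bound. This is where both conditions on the parameters are used — the condition on $\epsilon$ keeps observed-bad channels confined to the bottom of the order so the recursion is not contaminated, and the condition on $\beta$ (inherited from Lemma~\ref{lemma:exchange_err_sym}) is needed each time two vectors with the same belief multiset but different policy orderings must be compared.
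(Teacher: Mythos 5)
Your overall strategy -- a backward induction in which the base case at $T$ reduces to $F$-differences and the inductive step leans on decomposability, symmetry, and the exchange inequality applied at $t+1$ -- is the paper's strategy, and your treatment of the \emph{second} inequality is essentially the paper's proof: reduce via the Corollary to $(\omega_1-\omega_N)\,[W_t(1,\omega_2,\cdots,0)-W_t(0,\omega_2,\cdots,1)]$, exploit the degenerate updates at beliefs $0$ and $1$, and obtain the recursion $H_t\le\Delta_{max}+\beta(1-\epsilon)(p_{11}-p_{01})H_{t+1}$, which telescopes to the stated geometric sum. One caveat on framing: you list Lemma~\ref{lemma:exchange_err_sym} among the ``structural results already proved,'' but its proof in turn requires the induction hypothesis of the present lemma at $t+1$; the two lemmas must be carried in a single joint backward induction (as the paper does), otherwise the argument is circular. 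Your uses of Lemma~\ref{lemma:exchange_err_sym} are all at slot $t+1$, so this is repairable by restating the induction hypothesis to include both lemmas.

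The genuine gap is in the \emph{first} inequality. You apply the ``degenerate beliefs'' device only to the second inequality; for the first you propose a direct pairing of the outcome branches of $W_t(\omega_1,\cdots,\omega_N)$ and $W_t(\omega_N,\omega_1,\cdots,\omega_{N-1})$ at the original beliefs, claiming the only mismatch is the weight $(1-\epsilon)\omega_k$ versus $(1-\epsilon)\omega_N$ on the observed-good branches. That is not true: the observed-bad branches also mismatch, since the sensed channel updates to $\tau(\varphi(\omega_k))$ in one expansion and $\tau(\varphi(\omega_N))$ in the other, and the unsensed channel updates to $\tau(\omega_N)$ versus $\tau(\omega_k)$ at different positions -- exactly the non-linearity and outcome-dependence the paper's Remark warns about. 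The paper avoids this by first decomposing $W_t$ in \emph{both} $\omega_k$ and $\omega_N$ (its equation~\eqref{eq:4_term}), producing four terms in which every update is $p_{01}$ or $p_{11}$. Moreover, your accounting does not close: you bound the immediate-reward part by $(\omega_k-\omega_N)\Delta_{max}$ and assert the discounted part ``keeps the overall bound at $(1-\omega_N)\Delta_{max}$,'' but the hard content of the paper's proof is the cross term with coefficient $(1-\omega_k)\omega_N$, where a strictly positive discounted contribution of order $\beta\big[(1-\epsilon)(1-p_{01})+\tfrac{\epsilon(p_{11}-p_{01})}{1-(1-\epsilon)(p_{11}-p_{01})}\big]\Delta_{max}$ must be cancelled by $-\Delta_{min}$ from the reward difference $F(\cdots,0)-F(1,\cdots)$; this is precisely where condition (3) on $\beta$ and the induction hypothesis of the \emph{second} inequality are consumed. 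Your sketch never invokes condition (3) for the first inequality and never identifies this branch, so as written the inductive step for the first inequality does not go through.
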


Lemma~\ref{lemma:exchange_err_sym} states that by swapping two elements in $\Omega$ with the former larger than the latter, the user does not increase the total expected reward. Lemma~\ref{lemma:upper_bound_err_sym}, on the other hand, gives the upper bound on the difference of the total reward of the two swapping operations, swapping $\omega_N$ and $\omega_k$ ($k=N-1,\cdots,1$) and swapping $\omega_1$ and $\omega_N$, respectively. For clarity of presentation, the detailed proofs of the two lemmas are deferred to the Appendix. From a technical point of view, it is insightful to compare the methodology in the proof with that in the analysis presented in~\cite{Ahmad09b} for the perfect sensing case with $k=1$. The key point of the analysis in~\cite{Ahmad09b} lies in the coupling argument leading to Lemma~3 in~\cite{Ahmad09b}. This analysis, however, cannot be directly applied in the generic case with imperfect sensing due to the non-linearity of the belief vector update as stated in the remark after equation \eqref{eq:belief_update_err_sym}. Hence, we base our analysis on the intrinsic structure of the auxiliary value function $W$ and investigate the different "branches" of channel realizations to derive the relevant bounds, which are further applied to study the optimality of the myopic sensing policy, as stated in the following theorem.

\begin{theorem}
\label{theorem:optimal_condition_err_sym}
If ${p_{01}}\leq{\omega_{i}(1)}\leq{p_{11}}, {1}\leq{i}\leq{N}$, the myopic sensing policy is optimal if the following conditions hold: (1) $F(\Omega)$ is regular; (2) $\epsilon <\frac{p_{01}(1-p_{11})}{P_{11}(1-p_{01})}$; (3) $\beta \le \frac{\Delta_{min}}{\Delta_{max}\left[(1-\epsilon)(1-p_{01}) + \frac{\epsilon(p_{11}-p_{01})}{1-(1-\epsilon)(p_{11}-p_{01})}\right]}$.
\end{theorem}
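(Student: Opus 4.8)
\textbf{Proof proposal for Theorem~\ref{theorem:optimal_condition_err_sym}.}

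The plan is to proceed by backward induction on the time slot $t$, showing that under the three stated conditions the myopic sensing policy coincides with the policy whose value is $W_t$, and that this policy is in fact optimal. The crux is to exploit the structural properties already established—symmetry (Lemma~\ref{lemma:symmetry_sym_err}), decomposability (Lemma~\ref{lemma:decomposability_sym_err}), monotonicity (Lemma~\ref{lemma:monotonicity_sym_err}), and especially the exchange inequality (Lemma~\ref{lemma:exchange_err_sym})—together with the fact that the belief-ordering interval $[p_{01},p_{11}]$ is preserved by the dynamics. First I would verify the invariance: if $p_{01}\le\omega_i(1)\le p_{11}$ for all $i$, then by Lemma~\ref{lemma:property_tau_err_sym} the operator $\tau$ maps $[0,1]$ into $[p_{01},p_{11}]$, and by Lemma~\ref{lemma:property_phi_err_sym} $\tau(\varphi(\cdot))$ likewise stays in $[p_{01},p_{11}]$, while the post-ACK value $p_{11}$ is the interval's right endpoint; hence every reachable belief vector $\Omega(t)$ has all components in $[p_{01},p_{11}]$, so Lemma~\ref{lemma:exchange_err_sym} applies along the whole trajectory.

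The core induction runs as follows. Sort $\Omega(t)$ decreasingly as $\omega_1\ge\cdots\ge\omega_N$, so that the myopic action is $\mathcal{A}=\{1,\dots,k\}$. I claim $V_t(\Omega(t))=W_t(\Omega(t))$ and the myopic action attains the maximum in the dynamic-programming recursion. The base case $t=T$ is immediate: $W_T=F(\omega_1,\dots,\omega_k)$, and since $F$ is regular (symmetric, monotone non-decreasing, decomposable), choosing the $k$ largest beliefs maximizes $\mathbb{E}[R_\pi]$, so $V_T=W_T$. For the inductive step, assume $V_{t+1}=W_{t+1}$. Starting from $\Omega(t)$ sorted decreasingly, I must show that sensing $\{1,\dots,k\}$ is optimal among all size-$k$ subsets. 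Using decomposability of $W_{t+1}$ (via the Corollary following Lemma~\ref{lemma:decomposability_sym_err}) to reduce the comparison of two candidate action sets to the comparison of $W_{t+1}$ values differing only by a swap of a "$1$" and a "$0$" in selected coordinates, and then repeatedly invoking Lemma~\ref{lemma:exchange_err_sym} to show each such swap (moving a larger-indexed, hence smaller-belief, channel ahead of a smaller-indexed one) cannot increase the continuation value, I conclude the myopic subset is optimal slot-by-slot. Combined with the fact that $F$ itself is maximized by the myopic choice, this gives $V_t(\Omega(t)) = \max_{\mathcal{A}}\mathbb{E}[\cdots] = W_t(\Omega(t))$, closing the induction; since $W_1$ is by construction the value of the myopic policy and equals $V_1$, the myopic policy is optimal.

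The main obstacle I anticipate is the inductive step's reduction from "optimal action set" to "pairwise exchange," because unlike the perfect-sensing case the belief update is non-linear and the transition depends on the observation outcome, so the continuation term $\sum_{\mathcal{E}} Pr(\mathcal{N}(k),\mathcal{E})\,W_{t+1}(\Omega_{\mathcal{E}}(t+1))$ does not factor cleanly across channels. The key to overcoming this is that $W_{t+1}$ is decomposable and symmetric in its first $k$ arguments: decomposability lets me peel off one channel at a time and express any value as an affine combination of values at the vertices $\{0,1\}$ of that coordinate, reducing the action-comparison to comparisons between belief vectors that differ only by transposing a $1$ and a $0$—exactly the configuration controlled by Lemma~\ref{lemma:exchange_err_sym} (and quantitatively by Lemma~\ref{lemma:upper_bound_err_sym} when one needs the magnitude of the gap). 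A secondary subtlety is bookkeeping: after one slot the freshly-updated beliefs $p_{11}$, $\tau(\varphi(\omega_i))$, and $\tau(\omega_i)$ must be re-sorted, and I must check—again from Lemmas~\ref{lemma:property_tau_err_sym} and \ref{lemma:property_phi_err_sym}, i.e. $\tau(\varphi(\omega))\le p_{01}\le\tau(\omega)\le p_{11}$ for $\omega\in[p_{01},p_{11}]$—that the ordering induced by the myopic policy in slot $t+1$ is consistent with the "promote-on-ACK, demote-otherwise" reordering baked into the definition of $W_{t+1}$, so that the auxiliary value function genuinely tracks the myopic policy's reward on the re-sorted vector.
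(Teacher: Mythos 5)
Your proposal is correct and follows essentially the same route as the paper: both reduce optimality to repeated application of the exchange inequality (Lemma~\ref{lemma:exchange_err_sym}) to sort the belief vector decreasingly --- the paper phrases this as a bubble-sort-plus-contradiction argument, while you phrase the same pairwise swaps as the inductive step of a backward induction establishing $V_t=W_t$. If anything your write-up is more explicit than the paper's, which compresses the reduction from ``optimal over all policies'' to ``$W_t$ maximized over permutations of its argument'' into the single phrase ``it suffices to show,'' and defers the $[p_{01},p_{11}]$-invariance check to the remark following the theorem.
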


\begin{proof}
It suffices to show that for $t=1, \cdots, T$, by sorting $\Omega(t)$ in decreasing order such that $\omega_1\ge\cdots\ge\omega_N$, it holds that $W_t(\omega_1,\cdots,\omega_N)\ge W_t(\omega_{i_1},\cdots,\omega_{i_N})$, where $(\omega_{i_1},\cdots,\omega_{i_N})$ is any permutation of $(1,\cdots,N)$.

We prove the above inequality by contradiction. Assume, by contradiction, the maximum of $W_t$ is achieved at $(\omega_{i_1^*},\cdots,\omega_{i_N^*})\ne (\omega_1,\cdots,\omega_N)$, i.e.,
\begin{equation}
W_t(\omega_{i_1^*},\cdots,\omega_{i_N^*})>W_t(\omega_1,\cdots,\omega_N).
\label{eq:aux1}
\end{equation}

However, run a bubble sort algorithm on $(\omega_{i_1^*},\cdots,\omega_{i_N^*})$ by repeatedly stepping through it, comparing each pair of adjacent element $\omega_{i_l^*}$ and $\omega_{i_{l+1}^*}$ and swapping them if $\omega_{i_l^*}<\omega_{i_l^*+1}$. Note that when the algorithm terminates, the channel belief vector are sorted decreasingly, that is to say, it becomes $(\omega_1,\cdots,\omega_N)$.  By applying Lemma~\ref{lemma:exchange_err_sym} at each swapping, we have $W_t(\omega_{i_1^*},\cdots,\omega_{i_N^*})\le W_t(\omega_1,\cdots,\omega_N)$, which contradicts to~\eqref{eq:aux1}.  Theorem~\ref{theorem:optimal_condition_err_sym} is thus proven.
\end{proof}

As noted in~\cite{Kliu10}, when the initial belief $\omega_i$ is set to $\frac{p_{01}}{p_{01}+1-p_{11}}$ as is often the case in practical systems, it can be checked that ${p_{01}}\leq{\omega_{i}(1)}\leq{p_{11}}$ holds. Moreover, even the initial belief does not fall in $[p_{01},p_{11}]$, all the the belief values are bounded in the interval from the second slot following Lemma~\ref{lemma:property_tau_err_sym}. Hence our results can be extended by treating the first slot separately from the future slots.

\subsection{Discussion}
\label{subsection:discussion}

In this subsection, we illustrate the application of the result obtained above in two concrete scenarios and compare our work with the existing results.

Consider the channel access problem in which the user is limited to sense $k$ channels and gets one unit of reward if a sensed channel is in the good state, i.e., the utility function can be formulated as $F(\Omega_A)=(1-\epsilon)\sum_{i\in A}\omega_i$. Note that the optimality of the myopic sensing policy under this model is studied in~\cite{Kliu10} for a subset of scenarios where $k=1$, $N=2$. We now study the generic case with $k, N\ge 2$. To that end, we apply Theorem~\ref{theorem:optimal_condition_err_sym}. Notice in this example, we have $\Delta_{min}=\Delta_{max}=1-\epsilon$. We can then verify that when $\epsilon <\frac{p_{01}(1-p_{11})}{P_{11}(1-p_{10})}$, it holds that $\frac{\Delta_{min}}{\Delta_{max}[(1-\epsilon)(1-p_{01}) + \frac{\epsilon(p_{11}-p_{01})}{1-(1-\epsilon)(p_{11}-p_{01})}]}>1$. Therefore, when the condition 1 and 2 holds, the myopic sensing policy is optimal for any $\beta$. This result in generic cases significantly extends the results obtained in~\cite{Kliu10} where the optimality of the myopic policy is proved for the case of two channels and only conjectured for general cases.

Next consider another scenario where the user can sense $k$ channels but can only choose one of them to transmit its packets. Under this model, the user wants to maximize its expected throughput. More specifically, the slot utility function $F=F(\Omega_A)= 1- \Pi_{i\in{\cal A}} [1-(1-\epsilon)\omega_i]$, which is regular. In this context, we have $\Delta_{max}=(1-\epsilon)^{k-1}p_{11}^{k-1}$ and $\Delta_{min}=(1-\epsilon)^{k-1}p_{01}^{k-1}$. The third condition on for the myopic policy to be optimal becomes $\beta \le \frac{p_{01}^{k-1}}{ p_{11}^{k-1} [{(1-\epsilon)(1-p_{01}) + \frac{\epsilon(p_{11}-p_{01})}{1-(1-\epsilon)(p_{11}-p_{01})}}]}$. Particularly, when $\epsilon=0$, $\beta\le\frac{p_{01}^{k-1}}{p_{11}^{k-1}(1-p_{01})}$. It can be noted that even when there is no sensing error, the myopic policy is not ensured to be optimal, which confirms our findings in previous work \cite{Wang11} on perfect sensing scenarios.

\section{Related Work}
\label{section:discussion}

Due to its application in numerous engineering problems, the restless multi-armed bandit (RMAB) problem is of fundamental importance in stochastic decision theory. However, finding the optimal policy in the generic RMAB problem is shown to be PSPACE-hard by Papadimitriou~\emph{et al.} in~\cite{Papadimitriou99}. Whittle proposed a heuristic index policy, called Whittle index policy~\cite{Whittle88} which are shown to be asymptotically optimal in certain limited regime under some specific constraints~\cite{Weber90}. Unfortunately, not every RMAB problem has a well-defined Whittle index. Moreover, computing the Whittle index can be prohibitively complex. In this regard, Liu~\emph{et al.} studied in~\cite{Liu10} the indexability of a class of RMAB problems relevant to dynamic multi-channel access applications. However, the optimality of the myopic policy based on Whittle index is not ensured in the general cases, especially when the arms follow non-identical Markov chains.

A natural alternative, given that the RMAB problem is not tractable, is to seek simple myopic policies maximizing the short-term reward. In this line of research, significant research efforts have been devoted to studying the performance gap between the myopic policy and the optimal one and designing approximation algorithms and heuristic policies (cf. \cite{Guha07,Guha09,Bertsimas00}). Specifically, a simple myopic policy, termed as greedy policy, is developed in~\cite{Guha07} that yields a factor $2$ approximation of the optimal policy for a subclass of scenarios referred to as \emph{Monotone bandits}. Recently, the RMAB problem finds its application in the opportunistic channel access and has motivated the study of the myopic sensing policy in this context. More specifically, the structure of the myopic sensing policy is studied in~\cite{Qzhao08}. The optimality of the myopic sensing policy is derived in~\cite{Ahmad09b} for the positively correlated channels when the sender is limited to choose one channel each time (i.e., $k=1$). The result is further extended in to the case of sensing multiple channels ($k\ge 1$) channels in~\cite{Ahmad09} for a particular form of utility function modeling the fact that the user gets one unit of reward for each channel sensed good. A separation principle has been established in [11] which reveals the optimality of the myopic approach in the design of the channel state detector and the access policy. Our previous work~\cite{Wang11TPS}~\cite{Wang11TIS} adopts another line of research by focusing a family of generic and practically important utility functions and deriving closed-form conditions under which the myopic sensing policy is ensured to be optimal. In the context of imperfect sensing, the optimality of the myopic sensing policy is proved for the case of $N=2$ and $k=1$ in~\cite{Kliu10}. Our work presented in this paper contributes the literature by deriving the closed-form conditions on the optimality of the myopic sensing policy with imperfect sensing in the general case.

\section{Conclusion}
\label{section:conclusion}

In this paper, we have investigated the problem of opportunistic channel access under imperfect channel state sensing. We have derived closed-form conditions under which the myopic sensing policy is ensured to be optimal. Due to the generic RMAB formulation of the problem, the obtained results and the analysis methodology presented in this paper are widely applicable in a wide range of domains.

\appendices
\section{Proof of Lemma~\ref{lemma:decomposability_sym_err}}
We proceed the proof by backward induction. Firstly, it is easy to verify that the lemma holds for slot $T$.

Assume that the lemma holds from slots $t+1,\cdots,T$, we now prove it also holds for slot $t$ by the following two different cases.
\begin{itemize}
  \item Case 1: channel $l$ is not sensed in slot $t$, i.e. $l\ge k+1$. Let ${\cal M}\triangleq {\cal N}(k)=\{1,\cdots,k\}$, $\omega_l= 0$ and $1$, respectively, we have
\begin{eqnarray*}
  W_t(\omega_1,\cdots,\omega_l,\cdots,\omega_n) &=& F(\omega_1,\cdots,\omega_k)+\beta \sum_{{\cal E}\subseteq{\cal M}}Pr({\cal M}, {\cal E}) W_{t+1}(\Omega_{l}^{{\mathcal E}}(t+1)), \\
  W_t(\omega_1,\cdots,0,\cdots,\omega_n) &=& F(\omega_1,\cdots,\omega_k)+\beta \sum_{{\cal E}\subseteq{\cal M}}Pr({\cal M}, {\cal E}) W_{t+1}(\Omega_{l,0}^{{\mathcal E}}(t+1)), \\
  W_t(\omega_1,\cdots,1,\cdots,\omega_n) &=& F(\omega_1,\cdots,\omega_k)+\beta \sum_{{\cal E}\subseteq{\cal M}}Pr({\cal M}, {\cal E}) W_{t+1}(\Omega_{l,1}^{{\mathcal E}}(t+1)),
\end{eqnarray*}
where
\begin{eqnarray*}
\Omega_{l}^{{\mathcal E}}(t+1)&=&(\mathbf{P_{11}^{\mathcal E}}, \mathbf{\Phi}(k+1,l-1),\tau(\omega_l),\mathbf{\Phi}(l+1,N),\mathbf{{Q}^{{\cal M},{\mathcal E}}}), \\
\Omega_{l,0}^{{\mathcal E}}(t+1)&=&(\mathbf{P_{11}^{\mathcal E}}, \mathbf{\Phi}(k+1,l-1),p_{01},\mathbf{\Phi}(l+1,N),\mathbf{{Q}^{{\cal M},{\mathcal E}}}), \\
\Omega_{l,1}^{{\mathcal E}}(t+1)&=&(\mathbf{P_{11}^{\mathcal E}}, \mathbf{\Phi}(k+1,l-1),p_{11},\mathbf{\Phi}(l+1,N), \mathbf{{Q}^{{\cal M},{\mathcal E}}}).
\end{eqnarray*}
To prove the lemma in this case, it is sufficient to prove
\begin{equation}
\label{no_Acc_decomposability_sym_err}
    W_{t+1}(\Omega_{l}^{{\mathcal E}}(t+1)) = (1-\omega_l)W_{t+1}(\Omega_{l,0}^{{\mathcal E}}(t+1)) + \omega_l W_{t+1}(\Omega_{l,1}^{{\mathcal E}}(t+1))
\end{equation}
According to induction result, we have
\begin{eqnarray}
\label{no_Acc_decomposability_sym_err_case1}
\begin{split}
  W_{t+1}(\Omega_{l}^{{\mathcal E}}(t+1)) = & \tau(\omega_l)\cdot W_{t+1}(\mathbf{P_{11}^{\mathcal E}}, \mathbf{\Phi}(k+1,l-1),1,\mathbf{\Phi}(l+1,N),\mathbf{{Q}^{{\cal M},{\mathcal E}}}) \\
  & + (1-\tau(\omega_l))\cdot W_{t+1}(\mathbf{P_{11}^{\mathcal E}}, \mathbf{\Phi}(k+1,l-1),0,\mathbf{\Phi}(l+1,N),\mathbf{{Q}^{{\cal M},{\mathcal E}}})
\end{split}
\end{eqnarray}
\begin{eqnarray}
\label{no_Acc_decomposability_sym_err_case2}
\begin{split}
  W_{t+1}(\Omega_{l,0}^{{\mathcal E}}(t+1)) = &  p_{01}\cdot W_{t+1}(\mathbf{P_{11}^{\mathcal E}}, \mathbf{\Phi}(k+1,l-1),1,\mathbf{\Phi}(l+1,N),\mathbf{{Q}^{{\cal M},{\mathcal E}}}) \\
  & + (1-p_{01})\cdot W_{t+1}(\mathbf{P_{11}^{\mathcal E}}, \mathbf{\Phi}(k+1,l-1),0,\mathbf{\Phi}(l+1,N),\mathbf{{Q}^{{\cal M},{\mathcal E}}})
\end{split}
\end{eqnarray}
\begin{eqnarray}
\label{no_Acc_decomposability_sym_err_case3}
\begin{split}
   W_{t+1}(\Omega_{l,0}^{{\mathcal E}}(t+1)) = &  p_{11}\cdot W_{t+1}(\mathbf{P_{11}^{\mathcal E}}, \mathbf{\Phi}(k+1,l-1),1,\mathbf{\Phi}(l+1,N),\mathbf{{Q}^{{\cal M},{\mathcal E}}}) \\
  & + (1-p_{11})\cdot W_{t+1}(\mathbf{P_{11}^{\mathcal E}}, \mathbf{\Phi}(k+1,l-1),0,\mathbf{\Phi}(l+1,N),\mathbf{{Q}^{{\cal M},{\mathcal E}}})
\end{split}
\end{eqnarray}
Combing~\eqref{no_Acc_decomposability_sym_err_case1},~\eqref{no_Acc_decomposability_sym_err_case2},~\eqref{no_Acc_decomposability_sym_err_case3}, we have~\eqref{no_Acc_decomposability_sym_err}.

  \item Case 2: channel $l$ is sensed in slot $t$, i.e. $l\le k$. Let ${\cal M}\triangleq {\cal N}(k)\setminus\{l\}=\{1, \cdots,l-1,l+1,\cdots,k\}$, we have according to~\eqref{eq:w_t_err_sym}
  \begin{equation*}
    \begin{split}
       W_t(\Omega(t))= & F(\omega_1, \cdots,\omega_l,\cdots, \omega_k)\\
       & +\beta (1-\epsilon) \omega_l \sum_{{\cal E}\subseteq{\cal M}}Pr({\cal M}, {\cal E}) W_{t+1}(\mathbf{P_{11}^{\mathcal E}}, p_{11}, \mathbf{\Phi}(k+1,N), \mathbf{\overline{Q}^{{\cal M},{\mathcal E},l}},\mathbf{\underline{Q}^{{\cal M},{\mathcal E},l}}) \\
       & +\beta [1-(1-\epsilon)\omega_l] \sum_{{\cal E}\subseteq{\cal M}}Pr({\cal M}, {\cal E}) W_{t+1}(\mathbf{P_{11}^{\mathcal E}}, \mathbf{\Phi}(k+1,N), \mathbf{\overline{Q}^{{\cal M},{\mathcal E},l}},\tau(\varphi(\omega_l)), \mathbf{\underline{Q}^{{\cal M},{\mathcal E},l}})
    \end{split}
  \end{equation*}

  Let $\omega_l=0$ and $1$, respectively, we have
\begin{equation*}
\begin{split}
  W_t(\omega_1,\cdots,0,\cdots,\omega_n) =& F(\omega_1,\cdots,0,\cdots,\omega_k) \\
  &+\beta \sum_{{\cal E}\subseteq{\cal M}}Pr({\cal M}, {\cal E}) W_{t+1}(\mathbf{P_{11}^{\mathcal E}}, \mathbf{\Phi}(k+1,N), \mathbf{\overline{Q}^{{\cal M},{\mathcal E},l}},p_{01}, \mathbf{\underline{Q}^{{\cal M},{\mathcal E},l}}),
\end{split}
\end{equation*}

\begin{equation*}
\begin{split}
  W_t(\omega_1,\cdots,1,\cdots,\omega_n) =& F(\omega_1,\cdots,1,\cdots,\omega_k) \\
  &+\beta (1-\epsilon)\sum_{{\cal E}\subseteq{\cal M}}Pr({\cal M}, {\cal E}) W_{t+1}(\mathbf{P_{11}^{\mathcal E}}, p_{11}, \mathbf{\Phi}(k+1,N), \mathbf{\overline{Q}^{{\cal M},{\mathcal E},l}},\mathbf{\underline{Q}^{{\cal M},{\mathcal E},l}})\\
  &+\beta \epsilon\sum_{{\cal E}\subseteq{\cal M}}Pr({\cal M}, {\cal E}) W_{t+1}(\mathbf{P_{11}^{\mathcal E}}, \mathbf{\Phi}(k+1,N), \mathbf{\overline{Q}^{{\cal M},{\mathcal E},l}},p_{11}, \mathbf{\underline{Q}^{{\cal M},{\mathcal E},l}})
\end{split}
\end{equation*}
To prove the lemma in this case, it is sufficient to show
\begin{multline}
\label{Acc_decomposability_sym_err}
   [1-(1-\epsilon)\omega_l] W_{t+1}(\mathbf{P_{11}^{\mathcal E}}, \mathbf{\Phi}(k+1,N), \mathbf{\overline{Q}^{{\cal M},{\mathcal E},l}},\tau(\varphi(\omega_l)), \mathbf{\underline{Q}^{{\cal M},{\mathcal E},l}}) \\
   = (1-\omega_l) W_{t+1}(\mathbf{P_{11}^{\mathcal E}}, \mathbf{\Phi}(k+1,N), \mathbf{\overline{Q}^{{\cal M},{\mathcal E},l}},p_{01},\mathbf{\underline{Q}^{{\cal M},{\mathcal E},l}}) \\
   + \epsilon \omega_l  W_{t+1}(\mathbf{P_{11}^{\mathcal E}}, \mathbf{\Phi}(k+1,N), \mathbf{\overline{Q}^{{\cal M},{\mathcal E},l}},p_{11}, \mathbf{\underline{Q}^{{\cal M},{\mathcal E},l}})
\end{multline}

According to induction result, we have
\begin{multline}
\label{Acc_decomposability_sym_err_case1}
     W_{t+1}(\mathbf{P_{11}^{\mathcal E}}, \mathbf{\Phi}(k+1,N), \mathbf{\overline{Q}^{{\cal M},{\mathcal E},l}},\tau(\varphi(\omega_l)), \mathbf{\underline{Q}^{{\cal M},{\mathcal E},l}}) \\
     =\tau(\varphi(\omega_l)) W_{t+1}(\mathbf{P_{11}^{\mathcal E}}, \mathbf{\Phi}(k+1,N), \mathbf{\overline{Q}^{{\cal M},{\mathcal E},l}},1, \mathbf{\underline{Q}^{{\cal M},{\mathcal E},l}}) \\
     + (1-\tau(\varphi(\omega_l))) W_{t+1}(\mathbf{P_{11}^{\mathcal E}}, \mathbf{\Phi}(k+1,N), \mathbf{\overline{Q}^{{\cal M},{\mathcal E},l}},0, \mathbf{\underline{Q}^{{\cal M},{\mathcal E},l}})
\end{multline}
\begin{multline}
\label{Acc_decomposability_sym_err_case2}
    W_{t+1}(\mathbf{P_{11}^{\mathcal E}}, \mathbf{\Phi}(k+1,N), \mathbf{\overline{Q}^{{\cal M},{\mathcal E},l}},p_{01},\mathbf{\underline{Q}^{{\cal M},{\mathcal E},l}}) \\
    = p_{01} W_{t+1}(\mathbf{P_{11}^{\mathcal E}}, \mathbf{\Phi}(k+1,N), \mathbf{\overline{Q}^{{\cal M},{\mathcal E},l}},1,\mathbf{\underline{Q}^{{\cal M},{\mathcal E},l}}) \\
    + (1-p_{01}) W_{t+1}(\mathbf{P_{11}^{\mathcal E}}, \mathbf{\Phi}(k+1,N), \mathbf{\overline{Q}^{{\cal M},{\mathcal E},l}},0,\mathbf{\underline{Q}^{{\cal M},{\mathcal E},l}})
\end{multline}
\begin{multline}
\label{Acc_decomposability_sym_err_case3}
      W_{t+1}(\mathbf{P_{11}^{\mathcal E}}, \mathbf{\Phi}(k+1,N), \mathbf{\overline{Q}^{{\cal M},{\mathcal E},l}},p_{11}, \mathbf{\underline{Q}^{{\cal M},{\mathcal E},l}}) \\
      =  p_{11}  W_{t+1}(\mathbf{P_{11}^{\mathcal E}}, \mathbf{\Phi}(k+1,N), \mathbf{\overline{Q}^{{\cal M},{\mathcal E},l}},1, \mathbf{\underline{Q}^{{\cal M},{\mathcal E},l}}) \\
      + (1- p_{11}) W_{t+1}(\mathbf{P_{11}^{\mathcal E}}, \mathbf{\Phi}(k+1,N), \mathbf{\overline{Q}^{{\cal M},{\mathcal E},l}},0, \mathbf{\underline{Q}^{{\cal M},{\mathcal E},l}})
\end{multline}
Combing~\eqref{Acc_decomposability_sym_err_case1},~\eqref{Acc_decomposability_sym_err_case2},~\eqref{Acc_decomposability_sym_err_case3}, we have~\eqref{Acc_decomposability_sym_err}.
\end{itemize}
Combing the above analysis in two cases, we thus prove Lemma~\ref{lemma:decomposability_sym_err}.

\section{Proof of Lemma~\ref{lemma:monotonicity_sym_err}}

We proceed the proof by backward induction. Firstly, it is easy to verify that the lemma holds for slot $T$.

Assume that the lemma holds from slots $t+1,\cdots,T$, we now prove that it also holds for slot $t$ by distinguishing the following two cases.
\begin{itemize}
\item Case 1: channel $l$ is not sensed in slot $t$, i.e., $l\ge k+1$. In this case, the immediate reward is unrelated to $\omega_l$ and $\omega_l'$. Moreover, let $\Omega(t+1)$ and $\Omega'(t+1)$ denote the belief vector generated by $\Omega(t)=(\omega_1, \cdots, \omega_l, \cdots, \omega_N)$ and $\Omega'(t)=(\omega_1, \cdots, \omega_l', \cdots, \omega_N)$, respectively, it can be noticed that $\Omega(t+1)$ and $\Omega'(t+1)$ differ in only one element: $\omega_l'(t+1)\ge \omega_l(t+1)$. By induction, it holds that $W_{t+1}(\Omega'(t+1))\ge W_{t+1}(\Omega(t+1))$. Noticing~\eqref{eq:w_t_err_sym}, it follows that $W_{t}(\Omega'(t))\ge W_{t}(\Omega(t))$.
\item Case 2: channel $l$ is sensed in slot $t$, i.e., $l\le k$. Following Lemma~\ref{lemma:decomposability_sym_err} and after some straightforward algebraic operations, we have
\begin{multline*}
W_t(\omega_1, \cdots, \omega_l', \cdots, \omega_N)-W_t(\omega_1, \cdots, \omega_l, \cdots, \omega_N)= \\
(\omega'_l-\omega_l)[W_t(\omega_1, \cdots, 1, \cdots, \omega_N)-W_t(\omega_1, \cdots, 0, \cdots, \omega_N)].
\end{multline*}

Let ${\cal M}\triangleq {\cal N}(k)\setminus\{l\}=\{1, \cdots,l-1,l+1,\cdots,k\}$, by developing $W_t(\Omega(t))$ as a function of $\omega_l$, we have
\begin{align*}
W_t(\Omega(t)) &= F(\omega_1(t), \cdots, \omega_k(t))+\beta (1-\epsilon)\omega_l \sum_{{\cal E}\subseteq{\cal M}}Pr({\cal M}, {\cal E}) W_{t+1}(\Omega_{{\mathcal E}}(t+1)) \\
&+\beta [1-(1-\epsilon)\omega_l]\sum_{{\cal E}\subseteq{\cal M}}Pr({\cal M}, {\cal E}) W_{t+1}(\Omega_{{\mathcal E}}(t+1)).
\end{align*}

Let $\omega_l=0$ and $1$, respectively, we have
\begin{eqnarray*}
  W_t(\omega_1,\cdots,0,\cdots,\omega_n) &=& F(\omega_1,\cdots,0,\cdots,\omega_n)+\beta \sum_{{\cal E}\subseteq{\cal M}}Pr({\cal M}, {\cal E}) W_{t+1}(\Omega_0^{{\mathcal E}}(t+1)), \\
  W_t(\omega_1,\cdots,1,\cdots,\omega_n) &=& F(\omega_1,\cdots,1,\cdots,\omega_n)+\beta (1-\epsilon)\sum_{{\cal E}\subseteq{\cal M}}Pr({\cal M}, {\cal E}) W_{t+1}(\Omega_{1-\epsilon}^{{\mathcal E}}(t+1)) \\
  &+& \beta \epsilon\sum_{{\cal E}\subseteq{\cal M}}Pr({\cal M}, {\cal E}) W_{t+1}(\Omega_{\epsilon}^{{\mathcal E}}(t+1)),
\end{eqnarray*}
where
\begin{eqnarray*}
\Omega_0^{{\mathcal E}}(t+1)&=&(\mathbf{P_{11}^{\mathcal E}}, \mathbf{\Phi}(k+1,N), \mathbf{\overline{Q}^{{\cal M},{\mathcal E},l}},p_{01}, \mathbf{\underline{Q}^{{\cal M},{\mathcal E},l}}), \\
\Omega_{1-\epsilon}^{{\mathcal E}}(t+1)&=&(\mathbf{P_{11}^{\mathcal E}}, p_{11}, \mathbf{\Phi}(k+1,N), \mathbf{\overline{Q}^{{\cal M},{\mathcal E},l}}, \mathbf{\underline{Q}^{{\cal M},{\mathcal E},l}}), \\
\Omega_{\epsilon}^{{\mathcal E}}(t+1)&=&(\mathbf{P_{11}^{\mathcal E}}, \mathbf{\Phi}(k+1,N),\mathbf{\overline{Q}^{{\cal M},{\mathcal E},l}},p_{11}, \mathbf{\underline{Q}^{{\cal M},{\mathcal E},l}}).
\end{eqnarray*}

It can be checked that $\Omega_{1-\epsilon}^{{\mathcal E}}(t+1)\ge \Omega_0^{{\mathcal E}}(t+1)$ and $\Omega_{\epsilon}^{{\mathcal E}}(t+1)\ge \Omega_0^{{\mathcal E}}(t+1)$. It then follows from induction that given $\mathcal E$, $W_{t+1}(\Omega_{1-\epsilon}^{{\mathcal E}}(t+1))\ge W_{t+1}(\Omega_0^{{\mathcal E}}(t+1))$ and $W_{t+1}(\Omega_{1-\epsilon}^{{\mathcal E}}(t+1))\ge W_{t+1}(\Omega_0^{{\mathcal E}}(t+1))$. Noticing that $F$ is increasing, we then have
\begin{align*}
W_t(\omega_1,\cdots,1,\cdots,\omega_n)&-W_t(\omega_1,\cdots,0,\cdots,\omega_n) = F(\omega_1,\cdots,1,\cdots,\omega_n)-F(\omega_1,\cdots,0,\cdots,\omega_n) \\
&+\beta(1-\epsilon)\sum_{{\cal E}\subseteq{\cal M}}Pr({\cal M}, {\cal E}) [W_{t+1}(\Omega_{1-\epsilon}^{{\mathcal E}}(t+1))-W_{t+1}(\Omega_0^{{\mathcal E}}(t+1))] \\
&+\beta\epsilon\sum_{{\cal E}\subseteq{\cal M}}Pr({\cal M}, {\cal E}) [W_{t+1}(\Omega_{\epsilon}^{{\mathcal E}}(t+1))-W_{t+1}(\Omega_0^{{\mathcal E}}(t+1))] \ge 0.
\end{align*}
\end{itemize}

Combining the above analysis in two cases completes our proof.

\section{Proof of Lemma~\ref{lemma:exchange_err_sym} and Lemma~\ref{lemma:upper_bound_err_sym}}

Due to the dependency between the two lemmas, we prove them together by backward induction.

\textbf{We first show that Lemma~\ref{lemma:exchange_err_sym} and Lemma~\ref{lemma:upper_bound_err_sym} hold for slot $T$.} It is easy to verify that Lemma~\ref{lemma:exchange_err_sym} holds.

We then prove Lemma~\ref{lemma:upper_bound_err_sym}. Noticing that $p_{01}\le \omega_N\le \omega_k\le p_{11}\le 1$, we have
\begin{align*}
W_{T}(\omega_{1},\cdots,\omega_{N})&-W_{T}(\omega_{N},\omega_{1},\cdots,\omega_{N-1}) = F(\omega_{1},\cdots,\omega_{k}) - F(\omega_{N},\omega_{1},\cdots,\omega_{k-1}) \\
  &= (\omega_{k}-\omega_{N})[F(\omega_{1},\cdots,\omega_{k-1},1)-F(\omega_{1},\cdots,\omega_{k-1},0)] \le (1-\omega_N)\Delta_{max}, \\
W_{T}(\omega_{1},\cdots,\omega_{N})&-W_{T}(\omega_{N},\omega_{2},\cdots,\omega_{N-1}, \omega_1) = F(\omega_{1},\cdots,\omega_{k}) - F(\omega_{N},\omega_{2},\cdots,\omega_{k-1}) \\
  &= (\omega_{1}-\omega_{N})[F(1,\omega_{2},\cdots,\omega_{k})-F(0,\omega_{2},\cdots,\omega_{k})] \le (p_{11}-p_{01})\Delta_{max}.
\end{align*}
Lemma~\ref{lemma:upper_bound_err_sym} thus   holds for slot $T$.

\textbf{Assume that Lemma~\ref{lemma:exchange_err_sym} and Lemma~\ref{lemma:upper_bound_err_sym} hold for slots $T,\cdots, t+1$, we now prove that it holds for slot $t$.}

\textbf{We first prove Lemma~\ref{lemma:exchange_err_sym}.} We distinguish the following three cases considering $l<m$:
\begin{itemize}
\item Case 1: $l\ge k+1$. In this case, we have
\begin{align*}
W_t&(\omega_{1},\cdots,\omega_{l},\cdots,\omega_{m},\cdots,\omega_{N}) - W_{t}(\omega_{1},\cdots,\omega_{m},\cdots,\omega_{l},\cdots,\omega_{N}) \\
&= (\omega_{l}-\omega_{m})[W_{t}(\omega_{1},\cdots,1,\cdots, 0,\cdots,\omega_{N}) - W_{t}(\omega_{1},\cdots,0,\cdots,1,\cdots,\omega_{N})] \\
&= (\omega_{l}-\omega_{m})\beta \sum_{{\mathcal E}\subseteq{{\cal N}(k)}} Pr({\cal N}(k),{\mathcal E}) [W_{t+1}(\Omega_{{\mathcal E}}(t+1))-W_{t+1}(\Omega'_{{\mathcal E}}(t+1))],
\end{align*}
where
\begin{align*}
\Omega_{{\mathcal E}}(t+1)&=(\mathbf{P_{11}^{{\mathcal E}}},\tau(\omega_{k+1}),\cdots,p_{11},\cdots, p_{01},\cdots,\tau(\omega_{N}), \mathbf{Q^{{\cal N}(k),{\mathcal E}}}), \\
\Omega'_{{\mathcal E}}(t+1)&=(\mathbf{P_{11}^{{\mathcal E}}},\tau(\omega_{k+1}),\cdots,p_{01},\cdots,p_{11},\cdots,\tau(\omega_{N}), \mathbf{Q^{{\cal N}(k),{\mathcal E}}}).
\end{align*}
It follows from the induction result that $W_{t+1}(\Omega_{{\mathcal E}}(t+1))\ge W_{t+1}(\Omega'_{{\mathcal E}}(t+1))$. Hence
\begin{equation*}
    W_{t}(\omega_{1},\cdots,\omega_{l},\cdots,\omega_{m},\cdots,\omega_{N})\ge W_{t}(\omega_{1},\cdots,\omega_{m},\cdots,\omega_{l},\cdots,\omega_{N}).
\end{equation*}

\item Case 2: $l\le k$ and $m\ge k+1$. In this case, denote ${\cal M}\triangleq {\cal N}(k)\setminus\{l\}$, it can be noted that $ \mathbf{Q^{{\cal M},{\mathcal E}}}= \mathbf{\underline{Q}^{{\cal M},{\mathcal E},l}}+ \mathbf{\overline{Q}^{{\cal M},{\mathcal E},l}}$. In this case, we have
\begin{IEEEeqnarray*}{rCl}
W_t&(\omega_{1}&,\cdots,\omega_{l},\cdots,\omega_{m},\cdots,\omega_{N})- W_{t}(\omega_{1},\cdots,\omega_{m},\cdots,\omega_{l},\cdots,\omega_{N}) \\
& =& (\omega_{l}-\omega_{m})[W_{t}(\omega_{1},\cdots,1,\cdots, 0,\cdots,\omega_{N}) - W_{t}(\omega_{1},\cdots,0,\cdots,1,\cdots,\omega_{N})] \\
& = &(\omega_{l}-\omega_{m})[F(\omega_{1},\cdots,1,\cdots,\omega_{k})-F(\omega_{1},\cdots,0,\cdots,\omega_{k})+ \\
&& \beta \sum_{{\mathcal E}\subseteq{\cal M}} Pr({\cal M},{\mathcal E}) [(1-\epsilon)W_{t+1}(\mathbf{P_{11}^{{\mathcal E}}},p_{11},\tau({\omega_{k+1}}),\cdots,p_{01},\cdots,\tau({\omega_{N}}), \mathbf{Q^{{\cal M},{\mathcal E}}}) +\\
&& \epsilon W_{t+1}(\mathbf{P_{11}^{{\mathcal E}}},\tau({\omega_{k+1}}),\cdots,p_{01},\cdots,\tau({\omega_{N}}),\mathbf{\overline{Q}^{{\cal M},{\mathcal E},l}},p_{11}, \mathbf{\underline{Q}^{{\cal M},{\mathcal E},l}})- \\
&& W_{t+1}(\mathbf{P_{11}^{{\mathcal E}}},\tau({\omega_{k+1}}),\cdots,p_{11},\cdots,\tau({\omega_{N}}), \mathbf{\overline{Q}^{{\cal M},{\mathcal E},l}},p_{01},\mathbf{\underline{Q}^{{\cal M},{\mathcal E},l}})] \\
&\ge &(\omega_{l}-\omega_{m})[\Delta_{min}+ \beta \sum_{{\mathcal E}\subseteq{\cal M}} Pr({\cal M},{\mathcal E}) \cdot [(1-\epsilon)W_{t+1}(p_{01},\mathbf{P_{11}^{{\mathcal E}}},p_{11},\tau({\omega_{k+1}}),\cdots,\tau({\omega_{N}}), \mathbf{Q^{{\cal M},{\mathcal E}}})+ \\
&& \epsilon W_{t+1}(p_{01},\mathbf{P_{11}^{{\mathcal E}}},\tau({\omega_{k+1}}),\cdots,\tau({\omega_{N}}),
\mathbf{Q^{{\cal M},{\mathcal E}}},p_{11})-\\
&& W_{t+1}(\mathbf{P_{11}^{{\mathcal E}}},p_{11},\tau({\omega_{k+1}}),\cdots,\tau({\omega_{N}}), \mathbf{Q^{{\cal M},{\mathcal E}}},p_{01})] \\
&\ge & (\omega_{l}-\omega_{m})\left[\Delta_{min}-\beta\sum_{{\mathcal E}\subseteq{\cal M}} Pr({\cal M},{\mathcal E})\cdot\right. \nonumber \\
&& \left.\left((1-\epsilon)(1-p_{01})\Delta_{max}+\epsilon (p_{11}-p_{01})\Delta_{max}\frac{1-[\beta(1-\epsilon)(p_{11}-p_{01})]^{T-t}}{1-\beta(1-\epsilon)(p_{11}-p_{01})}\right)\right] \\
&\ge &(\omega_{l}-\omega_{m})\sum_{{\mathcal E}\subseteq{\cal M}} Pr({\cal M},{\mathcal E})\cdot \\
&&\left[\Delta_{min}-\beta\left((1-\epsilon)(1-p_{01})\Delta_{max}+\epsilon (p_{11}-p_{01})\Delta_{max}\frac{1}{1-(1-\epsilon)(p_{11}-p_{01})}\right)\right] \ge 0,
\end{IEEEeqnarray*}
where the first inequality follows the induction result of Lemma~\ref{lemma:exchange_err_sym}, the second inequality follows the induction result of Lemma~\ref{lemma:upper_bound_err_sym}, the third inequality follows the condition in the lemma.
\item Case 3: $l,m\ge k$. This case follows Lemma~\ref{lemma:symmetry_sym_err}.
\end{itemize}

Lemma 6 is thus proven for slot $t$.

\textbf{We then proceed to prove Lemma~\ref{lemma:upper_bound_err_sym}}. We start with the first inequality. We develop $W_t$ w.r.t. $\omega_k$ and $\omega_N$ according to Lemma~\ref{lemma:decomposability_sym_err} as follows:
\begin{IEEEeqnarray}{rCl}
&& W_{t}(\omega_{1},\cdots,\omega_{k-1},\omega_{k},\cdots,\omega_{n-1},\omega_{n})-W_{t}(\omega_{n},\omega_{1},\cdots,\omega_{k-1},\omega_{k},...,\omega_{n-1}) \nonumber \\
 &&= \omega_{k}\omega_{n}[W_{t}(\omega_{1},\cdots,\omega_{k-1},1,\omega_{k+1},\cdots,\omega_{n-1},1)-W_{t}(1,\omega_{1},\cdots,\omega_{k-1},1,\omega_{k+1},\cdots,\omega_{n-1})] \nonumber \\
 && + \omega_{k}(1-\omega_{n})[W_{t}(\omega_{1},\cdots,\omega_{k-1},1,\omega_{k+1},\cdots,\omega_{n-1},0)-W_{t}(0,\omega_{1},\cdots,\omega_{k-1},1,\omega_{k+1},\cdots,\omega_{n-1})]\nonumber \\
 && + (1-\omega_{k})\omega_{n}[W_{t}(\omega_{1},\cdots,\omega_{k-1},0,\omega_{k+1},\cdots,\omega_{n-1},1)-W_{t}(1,\omega_{1},\cdots,\omega_{k-1},0,\omega_{k+1},\cdots,\omega_{n-1})]\nonumber \\
 && + (1-\omega_{k})(1-\omega_{n})[W_{t}(\omega_{1},\cdots,\omega_{k-1},0,\omega_{k+1},\cdots,\omega_{n-1},0)-W_{t}(0,\omega_{1},\cdots,\omega_{k-1},0,\omega_{k+1},\cdots,\omega_{n-1})] \nonumber \\
\label{eq:4_term}
\end{IEEEeqnarray}

We proceed the proof by upbounding the four terms in~\eqref{eq:4_term}.

For the first term, we have
\begin{multline*}
W_{t}(\omega_{1},\cdots,\omega_{k-1},1,\omega_{k+1},\cdots,\omega_{n-1},1)-W_{t}(1,\omega_{1},\cdots,\omega_{k-1},1,\omega_{k+1},\cdots,\omega_{n-1})\\
   = \beta\sum_{{\mathcal E}\subseteq{\cal N}(k-1)} Pr({\cal N}(k-1),{\mathcal E})\cdot
 [(1-\epsilon)W_{t+1}(\mathbf{P_{11}^{{\mathcal E}}},p_{11},\mathbf{\Phi}(k+1,N-1), p_{11}, \mathbf{Q^{{\cal N}(k-1),{\mathcal E}}}) \\
 + \epsilon W_{t+1}(\mathbf{P_{11}^{{\mathcal E}}},\mathbf{\Phi}(k+1,N-1),p_{11},\mathbf{{Q}^{{\cal N}(k-1),{\mathcal E}}},p_{11}) \\
~~~~~~~~~~-(1-\epsilon)W_{t+1}(p_{11},\mathbf{P_{11}^{{\mathcal E}}},p_{11},\mathbf{\Phi}(k+1,N-1), \mathbf{Q^{{\cal N}(k-1),{\mathcal E}}}) \\
-\epsilon W_{t+1}(\mathbf{P_{11}^{{\mathcal E}}},p_{11},\mathbf{\Phi}(k+1,N-1),p_{11}, \mathbf{Q^{{\cal N}(k-1),{\mathcal E}}})] \le 0
\end{multline*}
where, the inequality follows the induction of Lemma~\ref{lemma:exchange_err_sym}.

For the second term, we have
\begin{IEEEeqnarray*}{rCl}
&&W_{t}(\omega_{1},\cdots,\omega_{k-1},1,\omega_{k+1},\cdots,\omega_{n-1},0)-W_{t}(0,\omega_{1},\cdots,\omega_{k-1},1,\omega_{k+1},\cdots,\omega_{n-1})\\ &=& F(\omega_{1},\cdots,\omega_{k-1},1)-F(0,\omega_{1},\cdots,\omega_{k-1})  \\
&&+\beta\sum_{{\mathcal E}\subseteq{\cal N}(k-1)} Pr({\cal N}(k-1),{\mathcal E})\cdot [
(1-\epsilon)W_{t+1}(\mathbf{P_{11}^{{\mathcal E}}},p_{11},\mathbf{\Phi}(k+1,N-1),p_{01}, \mathbf{Q^{{\cal N}(k-1),{\mathcal E}}}) \\
&& +\epsilon W_{t+1}(\mathbf{P_{11}^{{\mathcal E}}},\mathbf{\Phi}(k+1,N-1), p_{01},\mathbf{{Q}^{{\cal N}(k-1),{\mathcal E}}},p_{11}) -W_{t+1}(\mathbf{P_{11}^{{\mathcal E}}},p_{11},\mathbf{\Phi}(k+1,N-1),p_{01},\mathbf{Q^{{\cal N}(k-1),{\mathcal E}}})] \\
&=& F(\omega_{1},\cdots,\omega_{k-1},1)-F(0,\omega_{1},\cdots,\omega_{k-1}) +\beta\sum_{{\mathcal E}\subseteq{\cal N}(k-1)} Pr({\cal N}(k-1),{\mathcal E})\cdot \\
&& [\epsilon W_{t+1}(\mathbf{P_{11}^{{\mathcal E}}},\mathbf{\Phi}(k+1,N-1), p_{01},\mathbf{{Q}^{{\cal N}(k-1),{\mathcal E}}},p_{11}) - \epsilon W_{t+1}(\mathbf{P_{11}^{{\mathcal E}}},p_{11},\mathbf{\Phi}(k+1,N-1),p_{01},\mathbf{Q^{{\cal N}(k-1),{\mathcal E}}})] \\
&\le& \Delta_{max}
\end{IEEEeqnarray*}
following the induction of Lemma~\ref{lemma:exchange_err_sym}.

For the third term, we have
\begin{IEEEeqnarray*}{rCl}
 \IEEEeqnarraymulticol{3}{l} {
W_{t}(\omega_{1},\cdots,\omega_{k-1},0,\omega_{k+1},\cdots,\omega_{n-1},1)-W_{t}(1,\omega_{1},\cdots,\omega_{k-1},0,\omega_{k+1},\cdots,\omega_{n-1}) }\nonumber\\
&&=F(\omega_{1},\cdots,\omega_{k-1},0)-F(1,\omega_{1},\cdots,\omega_{k-1})+ \\
&&\beta\sum_{{\mathcal E}\subseteq{\cal N}(k-1)} Pr({\cal N}(k-1),{\mathcal E})[
W_{t+1}(\mathbf{P_{11}^{{\mathcal E}}},\mathbf{\Phi}(k+1,N-1), p_{11},\mathbf{Q^{{\cal N}(k-1),{\mathcal E}}}, p_{01})- \\
&& (1-\epsilon)W_{t+1}(p_{11}, \mathbf{P_{11}^{{\mathcal E}}},p_{01},\mathbf{\Phi}(k+1,N-1), \mathbf{Q^{{\cal N}(k-1),{\mathcal E}}})
- \epsilon W_{t+1}(\mathbf{P_{11}^{{\mathcal E}}},p_{01},\mathbf{\Phi}(k+1,N-1),p_{11}, \mathbf{Q^{{\cal N}(k-1),{\mathcal E}}})] \\
&\le& -\Delta_{min}+\beta\sum_{{\mathcal E}\subseteq{\cal N}(k-1)} Pr({\cal N}(k-1),{\mathcal E})[
W_{t+1}(\mathbf{P_{11}^{{\mathcal E}}},p_{11},\mathbf{\Phi}(k+1,N-1), \mathbf{Q^{{\cal N}(k-1),{\mathcal E}}}, p_{01})- \\
&& (1-\epsilon)W_{t+1}(p_{01},p_{11}, \mathbf{P_{11}^{{\mathcal E}}},\mathbf{\Phi}(k+1,N-1), \mathbf{Q^{{\cal N}(k-1),{\mathcal E}}})-
\epsilon W_{t+1}(p_{01},\mathbf{P_{11}^{{\mathcal E}}},\mathbf{\Phi}(k+1,N-1),\mathbf{Q^{{\cal N}(k-1),{\mathcal E}}},p_{11})] \\
&\le&-\Delta_{min} +  \beta\sum_{{\mathcal E}\subseteq{\cal N}(k-1)} Pr({\cal N}(k-1),{\mathcal E})\left[(1-\epsilon)(1-p_{01}) \Delta_{max}+\epsilon(p_{11}-p_{01})\Delta_{max}\frac{1-[\beta(1-\epsilon)(p_{11}-p_{01})]^{T-t}}{1-\beta(1-\epsilon)(p_{11}-p_{01})}\right] \\
&\le & \sum_{{\mathcal E}\subseteq{\cal N}(k-1)} Pr({\cal N}(k-1),{\mathcal E})\left[-\Delta_{min} + \beta \left[(1- \epsilon)(1-p_{01}) \Delta_{max} +  \epsilon(p_{11}-p_{01})\Delta_{max}\frac{1}{1-(1-\epsilon)(p_{11}-p_{01})}\right]\right] \le 0
\end{IEEEeqnarray*}
where the first inequality follows the induction result of Lemma~\ref{lemma:exchange_err_sym}, the second equality follows the induction result of Lemma~\ref{lemma:upper_bound_err_sym}, the forth inequality is due the condition in Lemma~\ref{lemma:upper_bound_err_sym}.

For the fourth term, we have
\begin{IEEEeqnarray*}{rCl}
 \IEEEeqnarraymulticol{3}{l} {
W_{t}(\omega_{1},\cdots,\omega_{k-1},0,\omega_{k+1},\cdots,\omega_{n-1},0)-W_{t}(0,\omega_{1},\cdots,\omega_{k-1},0,\omega_{k+1},\cdots,\omega_{n-1}) } \nonumber\\
&& = \beta\sum_{{\mathcal E}\subseteq{\cal N}(k-1)} Pr({\cal N}(k-1),{\mathcal E})  [W_{t+1}(\mathbf{P_{11}^{{\mathcal E}}},\mathbf{\Phi}(k+1,N-1),p_{01}, \mathbf{Q^{{\cal N}(k-1),{\mathcal E}}}, p_{01}) \nonumber \\
&& - W_{t+1}(\mathbf{P_{11}^{{\mathcal E}}}, p_{01},\mathbf{\Phi}(k+1,N-1),\mathbf{Q^{{\cal N}(k-1),{\mathcal E}}}, p_{01})] \nonumber \\
&&= \beta\sum_{{\mathcal E}\subseteq{\cal N}(k-1)} Pr({\cal N}(k-1),{\mathcal E})[W_{t+1}(\mathbf{P_{11}^{{\mathcal E}}},\mathbf{\Phi}(k+1,N-1),p_{01}, \mathbf{Q^{{\cal N}(k-1),{\mathcal E}}},p_{01}) \\
&& - W_{t+1}(p_{01},\mathbf{P_{11}^{{\mathcal E}}}, \mathbf{\Phi}(k+1,N-1),\mathbf{Q^{{\cal N}(k-1),{\mathcal E}}}, p_{01})] \nonumber \\
&&\leq \beta\sum_{{\mathcal E}\subseteq{\cal N}(k-1)} Pr({\cal N}(k-1),{\mathcal E})[W_{t+1}(\mathbf{P_{11}^{{\mathcal E}}},\mathbf{\Phi}(k+1,N-1), \mathbf{Q^{{\cal N}(k-1),{\mathcal E}}},p_{01},p_{01}) \nonumber \\
&& - W_{t+1}(p_{01},\mathbf{P_{11}^{{\mathcal E}}}, \mathbf{\Phi}(k+1,N-1),\mathbf{Q^{{\cal N}(k-1),{\mathcal E}}}, p_{01})] \nonumber \\
&&\le (1-p_{01}) \beta \Delta_{max}
\end{IEEEeqnarray*}
where, the second equality follows Lemma~\ref{lemma:symmetry_sym_err}, the first inequality follows the induction result of Lemma~\ref{lemma:exchange_err_sym} and the second inequality follows the induction result of Lemma~\ref{lemma:upper_bound_err_sym}.

Combing the above results of the four terms, we have
\begin{multline*}
    W_{t}(\omega_{1},\cdots,\omega_{N})-W_{t}(\omega_{n},\omega_{1},\cdots,\omega_{N-1})\\
\le \omega_{k}(1-\omega_{N}) \cdot \Delta_{max} + (1-\omega_{k})(1-\omega_{N}) \cdot (1-p_{01})\beta \Delta_{max}\\
\le \omega_{k}(1-\omega_{N})\Delta_{max} + (1-\omega_{k})(1-\omega_{N})\Delta_{max} \le (1-\omega_{N})\Delta_{max},
\end{multline*}
which completes the proof of the first part of Lemma~\ref{lemma:upper_bound_err_sym}.

Finally, we prove the second part of Lemma~\ref{lemma:upper_bound_err_sym}. To this end, denote ${\cal M}\triangleq \{2,\cdots,k\}$, we have
\begin{IEEEeqnarray*}{rCl}
\IEEEeqnarraymulticol{3}{l}{
W_{t}(\omega_{1},\cdots,\omega_{N})-W_{t}(\omega_{N},\omega_{2},\cdots,\omega_{N-1}, \omega_{1}) }  \\
&=& (\omega_{1}-\omega_{N})[W_{t}(1,\omega_{2},\cdots,\omega_{N-1},0)-W_{t}(0,\omega_{2},\cdots,\omega_{N-1},1)] \\
&=& (\omega_{1}-\omega_{N}) (F(1,\omega_{2},\cdots,\omega_{k})-F(0,\omega_{2},\cdots,\omega_{k}) + \beta\sum_{{\mathcal E}\subseteq{\cal M}} Pr({\cal M},{\mathcal E}) \cdot \\
&& [(1- \epsilon)W_{t+1}(\mathbf{P_{11}^{{\mathcal E}}},p_{11},\mathbf{\Phi}(k+1,N-1),p_{01}, \mathbf{Q^{{\cal M},{\mathcal E}}})+
   \epsilon W_{t+1}(\mathbf{P_{11}^{{\mathcal E}}},\mathbf{\Phi}(k+1,N-1),p_{01},p_{11}, \mathbf{Q^{{\cal M},{\mathcal E}}}) \\
&& -W_{t+1}(\mathbf{P_{11}^{{\mathcal E}}},\mathbf{\Phi}(k+1,N-1),p_{11}, p_{01}, \mathbf{Q^{{\cal M},{\mathcal E}}})] \\
&\le& (\omega_{1}-\omega_{N}) (\Delta_{max}+\beta\sum_{{\mathcal E}\subseteq{\cal M}} Pr({\cal M},{\mathcal E}) [(1- \epsilon)W_{t+1}(\mathbf{P_{11}^{{\mathcal E}}},p_{11},\mathbf{\Phi}(k+1,N-1),p_{01}, \mathbf{Q^{{\cal M},{\mathcal E}}}) \\
&&+\epsilon W_{t+1}(\mathbf{P_{11}^{{\mathcal E}}},\mathbf{\Phi}(k+1,N-1),p_{01},p_{11}, \mathbf{Q^{{\cal M},{\mathcal E}}})-W_{t+1}(\mathbf{P_{11}^{{\mathcal E}}},\mathbf{\Phi}(k+1,N-1),p_{01}, p_{11}, \mathbf{Q^{{\cal M},{\mathcal E}}})]) \\
&=& (\omega_{1}-\omega_{N}) (\Delta_{max} + \beta\sum_{{\mathcal E}\subseteq{\cal M}} Pr({\cal M},{\mathcal E}) [(1- \epsilon)W_{t+1}(\mathbf{P_{11}^{{\mathcal E}}},p_{11},\mathbf{\Phi}(k+1,N-1),p_{01},\mathbf{Q^{{\cal M},{\mathcal E}}})\\
&& -(1-\epsilon) W_{t+1}(\mathbf{P_{11}^{{\mathcal E}}},\mathbf{\Phi}(k+1,N-1),p_{01},p_{11}, \mathbf{Q^{{\cal M},{\mathcal E}}}]) \\
&\le& (\omega_{1}-\omega_{N}) (\Delta_{max} + \beta\sum_{{\mathcal E}\subseteq{\cal M}} Pr({\cal M},{\mathcal E}) [(1- \epsilon)W_{t+1}(\mathbf{P_{11}^{{\mathcal E}}},p_{11},\mathbf{\Phi}(k+1,N-1),\mathbf{Q^{{\cal M},{\mathcal E}}},p_{01})-\\
&& (1-\epsilon) W_{t+1}(p_{01}, \mathbf{P_{11}^{{\mathcal E}}},\mathbf{\Phi}(k+1,N-1), \mathbf{Q^{{\cal M},{\mathcal E}}},p_{11})]) \\
&\le& (p_{11}-p_{01})\left[\Delta_{max}+ \beta\sum_{{\mathcal E}\subseteq{\cal M}} Pr({\cal M},{\mathcal E})(1-\epsilon)\frac{1-[\beta(1-\epsilon)(p_{11}-p_{01})]^{T-t}}{1-\beta(1-\epsilon)(p_{11}-p_{01})} (p_{11}-p_{01})\Delta_{max}\right] \\
&=& \frac{1-[\beta(1-\epsilon)(p_{11}-p_{01})]^{T-t+1}}{1-\beta(1-\epsilon)(p_{11}-p_{01})} (p_{11}-p_{01})\Delta_{max}
\end{IEEEeqnarray*}
where the first two inequalities follows the recursive application of the induction result of Lemma~\ref{lemma:exchange_err_sym}, the third inequality follows the induction result of Lemma~\ref{lemma:upper_bound_err_sym}.

We thus complete the whole process of proving Lemma~\ref{lemma:exchange_err_sym} and Lemma~\ref{lemma:upper_bound_err_sym}.

\bibliographystyle{unsrt}
\bibliography{SenseError}

\end{document}